\definecolor{lgreen}{rgb}{0.0, 0.48, 0.0}
\definecolor{lpurple}{rgb}{0.48, 0.0, 0.48}
\definecolor{bblue}{rgb}{0.2, 0.4, 0.8}
\theoremstyle{definition}
\newtheorem{theorem}{Theorem}
\numberwithin{theorem}{section}
\newtheorem{lemma}[theorem]{Lemma}
\newtheorem{prop}[theorem]{Proposition}
\newtheorem{definition}[theorem]{Definition}
\newtheorem{example}[theorem]{Example}
\newtheorem{remark}[theorem]{Remark}
\newcommand\cref\autoref
\newcommand{\lref}[1]{\hyperlink{#1}{Line~\ref*{#1}}}
\let\emptyset\varnothing
\def\PYG@reset{\let\PYG@it=\relax \let\PYG@bf=\relax%
    \let\PYG@ul=\relax \let\PYG@tc=\relax%
    \let\PYG@bc=\relax \let\PYG@ff=\relax}
\def\PYG@tok#1{\csname PYG@tok@#1\endcsname}
\def\PYG@toks#1+{\ifx\relax#1\empty\else%
    \PYG@tok{#1}\expandafter\PYG@toks\fi}
\def\PYG@do#1{\PYG@bc{\PYG@tc{\PYG@ul{%
    \PYG@it{\PYG@bf{\PYG@ff{#1}}}}}}}
\def\PYG#1#2{\PYG@reset\PYG@toks#1+\relax+\PYG@do{#2}}
\def\csname PYG@tok@gd\endcsname{\def\PYG@tc##1{\textcolor[rgb]{0.63,0.00,0.00}{##1}}}
\def\csname PYG@tok@gu\endcsname{\let\PYG@bf=\textbf\def\PYG@tc##1{\textcolor[rgb]{0.50,0.00,0.50}{##1}}}
\def\csname PYG@tok@gt\endcsname{\def\PYG@tc##1{\textcolor[rgb]{0.00,0.27,0.87}{##1}}}
\def\csname PYG@tok@gs\endcsname{\let\PYG@bf=\textbf}
\def\csname PYG@tok@gr\endcsname{\def\PYG@tc##1{\textcolor[rgb]{1.00,0.00,0.00}{##1}}}
\def\csname PYG@tok@cm\endcsname{\let\PYG@it=\textit\def\PYG@tc##1{\textcolor[rgb]{0.25,0.50,0.50}{##1}}}
\def\csname PYG@tok@vg\endcsname{\def\PYG@tc##1{\textcolor[rgb]{0.10,0.09,0.49}{##1}}}
\def\csname PYG@tok@vi\endcsname{\def\PYG@tc##1{\textcolor[rgb]{0.10,0.09,0.49}{##1}}}
\def\csname PYG@tok@vm\endcsname{\def\PYG@tc##1{\textcolor[rgb]{0.10,0.09,0.49}{##1}}}
\def\csname PYG@tok@mh\endcsname{\def\PYG@tc##1{\textcolor[rgb]{0.40,0.40,0.40}{##1}}}
\def\csname PYG@tok@cs\endcsname{\let\PYG@it=\textit\def\PYG@tc##1{\textcolor[rgb]{0.25,0.50,0.50}{##1}}}
\def\csname PYG@tok@ge\endcsname{\let\PYG@it=\textit}
\def\csname PYG@tok@vc\endcsname{\def\PYG@tc##1{\textcolor[rgb]{0.10,0.09,0.49}{##1}}}
\def\csname PYG@tok@il\endcsname{\def\PYG@tc##1{\textcolor[rgb]{0.40,0.40,0.40}{##1}}}
\def\csname PYG@tok@go\endcsname{\def\PYG@tc##1{\textcolor[rgb]{0.53,0.53,0.53}{##1}}}
\def\csname PYG@tok@cp\endcsname{\def\PYG@tc##1{\textcolor[rgb]{0.74,0.48,0.00}{##1}}}
\def\csname PYG@tok@gi\endcsname{\def\PYG@tc##1{\textcolor[rgb]{0.00,0.63,0.00}{##1}}}
\def\csname PYG@tok@gh\endcsname{\let\PYG@bf=\textbf\def\PYG@tc##1{\textcolor[rgb]{0.00,0.00,0.50}{##1}}}
\def\csname PYG@tok@ni\endcsname{\let\PYG@bf=\textbf\def\PYG@tc##1{\textcolor[rgb]{0.60,0.60,0.60}{##1}}}
\def\csname PYG@tok@nl\endcsname{\def\PYG@tc##1{\textcolor[rgb]{0.63,0.63,0.00}{##1}}}
\def\csname PYG@tok@nn\endcsname{\let\PYG@bf=\textbf\def\PYG@tc##1{\textcolor[rgb]{0.00,0.00,1.00}{##1}}}
\def\csname PYG@tok@no\endcsname{\def\PYG@tc##1{\textcolor[rgb]{0.53,0.00,0.00}{##1}}}
\def\csname PYG@tok@na\endcsname{\def\PYG@tc##1{\textcolor[rgb]{0.49,0.56,0.16}{##1}}}
\def\csname PYG@tok@nb\endcsname{\def\PYG@tc##1{\textcolor[rgb]{0.00,0.50,0.00}{##1}}}
\def\csname PYG@tok@nc\endcsname{\let\PYG@bf=\textbf\def\PYG@tc##1{\textcolor[rgb]{0.00,0.00,1.00}{##1}}}
\def\csname PYG@tok@nd\endcsname{\def\PYG@tc##1{\textcolor[rgb]{0.67,0.13,1.00}{##1}}}
\def\csname PYG@tok@ne\endcsname{\let\PYG@bf=\textbf\def\PYG@tc##1{\textcolor[rgb]{0.82,0.25,0.23}{##1}}}
\def\csname PYG@tok@nf\endcsname{\def\PYG@tc##1{\textcolor[rgb]{0.00,0.00,1.00}{##1}}}
\def\csname PYG@tok@si\endcsname{\let\PYG@bf=\textbf\def\PYG@tc##1{\textcolor[rgb]{0.73,0.40,0.53}{##1}}}
\def\csname PYG@tok@s2\endcsname{\def\PYG@tc##1{\textcolor[rgb]{0.73,0.13,0.13}{##1}}}
\def\csname PYG@tok@nt\endcsname{\let\PYG@bf=\textbf\def\PYG@tc##1{\textcolor[rgb]{0.00,0.50,0.00}{##1}}}
\def\csname PYG@tok@nv\endcsname{\def\PYG@tc##1{\textcolor[rgb]{0.10,0.09,0.49}{##1}}}
\def\csname PYG@tok@s1\endcsname{\def\PYG@tc##1{\textcolor[rgb]{0.73,0.13,0.13}{##1}}}
\def\csname PYG@tok@dl\endcsname{\def\PYG@tc##1{\textcolor[rgb]{0.73,0.13,0.13}{##1}}}
\def\csname PYG@tok@ch\endcsname{\let\PYG@it=\textit\def\PYG@tc##1{\textcolor[rgb]{0.25,0.50,0.50}{##1}}}
\def\csname PYG@tok@m\endcsname{\def\PYG@tc##1{\textcolor[rgb]{0.40,0.40,0.40}{##1}}}
\def\csname PYG@tok@gp\endcsname{\let\PYG@bf=\textbf\def\PYG@tc##1{\textcolor[rgb]{0.00,0.00,0.50}{##1}}}
\def\csname PYG@tok@sh\endcsname{\def\PYG@tc##1{\textcolor[rgb]{0.73,0.13,0.13}{##1}}}
\def\csname PYG@tok@ow\endcsname{\let\PYG@bf=\textbf\def\PYG@tc##1{\textcolor[rgb]{0.67,0.13,1.00}{##1}}}
\def\csname PYG@tok@sx\endcsname{\def\PYG@tc##1{\textcolor[rgb]{0.00,0.50,0.00}{##1}}}
\def\csname PYG@tok@bp\endcsname{\def\PYG@tc##1{\textcolor[rgb]{0.00,0.50,0.00}{##1}}}
\def\csname PYG@tok@c1\endcsname{\let\PYG@it=\textit\def\PYG@tc##1{\textcolor[rgb]{0.25,0.50,0.50}{##1}}}
\def\csname PYG@tok@fm\endcsname{\def\PYG@tc##1{\textcolor[rgb]{0.00,0.00,1.00}{##1}}}
\def\csname PYG@tok@o\endcsname{\def\PYG@tc##1{\textcolor[rgb]{0.40,0.40,0.40}{##1}}}
\def\csname PYG@tok@kc\endcsname{\let\PYG@bf=\textbf\def\PYG@tc##1{\textcolor[rgb]{0.00,0.50,0.00}{##1}}}
\def\csname PYG@tok@c\endcsname{\let\PYG@it=\textit\def\PYG@tc##1{\textcolor[rgb]{0.25,0.50,0.50}{##1}}}
\def\csname PYG@tok@mf\endcsname{\def\PYG@tc##1{\textcolor[rgb]{0.40,0.40,0.40}{##1}}}
\def\csname PYG@tok@err\endcsname{\def\PYG@bc##1{\setlength{\fboxsep}{0pt}\fcolorbox[rgb]{1.00,0.00,0.00}{1,1,1}{\strut ##1}}}
\def\csname PYG@tok@mb\endcsname{\def\PYG@tc##1{\textcolor[rgb]{0.40,0.40,0.40}{##1}}}
\def\csname PYG@tok@ss\endcsname{\def\PYG@tc##1{\textcolor[rgb]{0.10,0.09,0.49}{##1}}}
\def\csname PYG@tok@sr\endcsname{\def\PYG@tc##1{\textcolor[rgb]{0.73,0.40,0.53}{##1}}}
\def\csname PYG@tok@mo\endcsname{\def\PYG@tc##1{\textcolor[rgb]{0.40,0.40,0.40}{##1}}}
\def\csname PYG@tok@kd\endcsname{\let\PYG@bf=\textbf\def\PYG@tc##1{\textcolor[rgb]{0.00,0.50,0.00}{##1}}}
\def\csname PYG@tok@mi\endcsname{\def\PYG@tc##1{\textcolor[rgb]{0.40,0.40,0.40}{##1}}}
\def\csname PYG@tok@kn\endcsname{\let\PYG@bf=\textbf\def\PYG@tc##1{\textcolor[rgb]{0.00,0.50,0.00}{##1}}}
\def\csname PYG@tok@cpf\endcsname{\let\PYG@it=\textit\def\PYG@tc##1{\textcolor[rgb]{0.25,0.50,0.50}{##1}}}
\def\csname PYG@tok@kr\endcsname{\let\PYG@bf=\textbf\def\PYG@tc##1{\textcolor[rgb]{0.00,0.50,0.00}{##1}}}
\def\csname PYG@tok@s\endcsname{\def\PYG@tc##1{\textcolor[rgb]{0.73,0.13,0.13}{##1}}}
\def\csname PYG@tok@kp\endcsname{\def\PYG@tc##1{\textcolor[rgb]{0.00,0.50,0.00}{##1}}}
\def\csname PYG@tok@w\endcsname{\def\PYG@tc##1{\textcolor[rgb]{0.73,0.73,0.73}{##1}}}
\def\csname PYG@tok@kt\endcsname{\def\PYG@tc##1{\textcolor[rgb]{0.69,0.00,0.25}{##1}}}
\def\csname PYG@tok@sc\endcsname{\def\PYG@tc##1{\textcolor[rgb]{0.73,0.13,0.13}{##1}}}
\def\csname PYG@tok@sb\endcsname{\def\PYG@tc##1{\textcolor[rgb]{0.73,0.13,0.13}{##1}}}
\def\csname PYG@tok@sa\endcsname{\def\PYG@tc##1{\textcolor[rgb]{0.73,0.13,0.13}{##1}}}
\def\csname PYG@tok@k\endcsname{\let\PYG@bf=\textbf\def\PYG@tc##1{\textcolor[rgb]{0.00,0.50,0.00}{##1}}}
\def\csname PYG@tok@se\endcsname{\let\PYG@bf=\textbf\def\PYG@tc##1{\textcolor[rgb]{0.73,0.40,0.13}{##1}}}
\def\csname PYG@tok@sd\endcsname{\let\PYG@it=\textit\def\PYG@tc##1{\textcolor[rgb]{0.73,0.13,0.13}{##1}}}
\def\PYGdefault@reset{\let\PYGdefault@it=\relax \let\PYGdefault@bf=\relax%
    \let\PYGdefault@ul=\relax \let\PYGdefault@tc=\relax%
    \let\PYGdefault@bc=\relax \let\PYGdefault@ff=\relax}
\def\PYGdefault@tok#1{\csname PYGdefault@tok@#1\endcsname}
\def\PYGdefault@toks#1+{\ifx\relax#1\empty\else%
    \PYGdefault@tok{#1}\expandafter\PYGdefault@toks\fi}
\def\PYGdefault@do#1{\PYGdefault@bc{\PYGdefault@tc{\PYGdefault@ul{%
    \PYGdefault@it{\PYGdefault@bf{\PYGdefault@ff{#1}}}}}}}
\def\PYGdefault#1#2{\PYGdefault@reset\PYGdefault@toks#1+\relax+\PYGdefault@do{#2}}
\def\csname PYGdefault@tok@gd\endcsname{\def\PYGdefault@tc##1{\textcolor[rgb]{0.63,0.00,0.00}{##1}}}
\def\csname PYGdefault@tok@gu\endcsname{\let\PYGdefault@bf=\textbf\def\PYGdefault@tc##1{\textcolor[rgb]{0.50,0.00,0.50}{##1}}}
\def\csname PYGdefault@tok@gt\endcsname{\def\PYGdefault@tc##1{\textcolor[rgb]{0.00,0.27,0.87}{##1}}}
\def\csname PYGdefault@tok@gs\endcsname{\let\PYGdefault@bf=\textbf}
\def\csname PYGdefault@tok@gr\endcsname{\def\PYGdefault@tc##1{\textcolor[rgb]{1.00,0.00,0.00}{##1}}}
\def\csname PYGdefault@tok@cm\endcsname{\let\PYGdefault@it=\textit\def\PYGdefault@tc##1{\textcolor[rgb]{0.25,0.50,0.50}{##1}}}
\def\csname PYGdefault@tok@vg\endcsname{\def\PYGdefault@tc##1{\textcolor[rgb]{0.10,0.09,0.49}{##1}}}
\def\csname PYGdefault@tok@vi\endcsname{\def\PYGdefault@tc##1{\textcolor[rgb]{0.10,0.09,0.49}{##1}}}
\def\csname PYGdefault@tok@vm\endcsname{\def\PYGdefault@tc##1{\textcolor[rgb]{0.10,0.09,0.49}{##1}}}
\def\csname PYGdefault@tok@mh\endcsname{\def\PYGdefault@tc##1{\textcolor[rgb]{0.40,0.40,0.40}{##1}}}
\def\csname PYGdefault@tok@cs\endcsname{\let\PYGdefault@it=\textit\def\PYGdefault@tc##1{\textcolor[rgb]{0.25,0.50,0.50}{##1}}}
\def\csname PYGdefault@tok@ge\endcsname{\let\PYGdefault@it=\textit}
\def\csname PYGdefault@tok@vc\endcsname{\def\PYGdefault@tc##1{\textcolor[rgb]{0.10,0.09,0.49}{##1}}}
\def\csname PYGdefault@tok@il\endcsname{\def\PYGdefault@tc##1{\textcolor[rgb]{0.40,0.40,0.40}{##1}}}
\def\csname PYGdefault@tok@go\endcsname{\def\PYGdefault@tc##1{\textcolor[rgb]{0.53,0.53,0.53}{##1}}}
\def\csname PYGdefault@tok@cp\endcsname{\def\PYGdefault@tc##1{\textcolor[rgb]{0.74,0.48,0.00}{##1}}}
\def\csname PYGdefault@tok@gi\endcsname{\def\PYGdefault@tc##1{\textcolor[rgb]{0.00,0.63,0.00}{##1}}}
\def\csname PYGdefault@tok@gh\endcsname{\let\PYGdefault@bf=\textbf\def\PYGdefault@tc##1{\textcolor[rgb]{0.00,0.00,0.50}{##1}}}
\def\csname PYGdefault@tok@ni\endcsname{\let\PYGdefault@bf=\textbf\def\PYGdefault@tc##1{\textcolor[rgb]{0.60,0.60,0.60}{##1}}}
\def\csname PYGdefault@tok@nl\endcsname{\def\PYGdefault@tc##1{\textcolor[rgb]{0.63,0.63,0.00}{##1}}}
\def\csname PYGdefault@tok@nn\endcsname{\let\PYGdefault@bf=\textbf\def\PYGdefault@tc##1{\textcolor[rgb]{0.00,0.00,1.00}{##1}}}
\def\csname PYGdefault@tok@no\endcsname{\def\PYGdefault@tc##1{\textcolor[rgb]{0.53,0.00,0.00}{##1}}}
\def\csname PYGdefault@tok@na\endcsname{\def\PYGdefault@tc##1{\textcolor[rgb]{0.49,0.56,0.16}{##1}}}
\def\csname PYGdefault@tok@nb\endcsname{\def\PYGdefault@tc##1{\textcolor[rgb]{0.00,0.50,0.00}{##1}}}
\def\csname PYGdefault@tok@nc\endcsname{\let\PYGdefault@bf=\textbf\def\PYGdefault@tc##1{\textcolor[rgb]{0.00,0.00,1.00}{##1}}}
\def\csname PYGdefault@tok@nd\endcsname{\def\PYGdefault@tc##1{\textcolor[rgb]{0.67,0.13,1.00}{##1}}}
\def\csname PYGdefault@tok@ne\endcsname{\let\PYGdefault@bf=\textbf\def\PYGdefault@tc##1{\textcolor[rgb]{0.82,0.25,0.23}{##1}}}
\def\csname PYGdefault@tok@nf\endcsname{\def\PYGdefault@tc##1{\textcolor[rgb]{0.00,0.00,1.00}{##1}}}
\def\csname PYGdefault@tok@si\endcsname{\let\PYGdefault@bf=\textbf\def\PYGdefault@tc##1{\textcolor[rgb]{0.73,0.40,0.53}{##1}}}
\def\csname PYGdefault@tok@s2\endcsname{\def\PYGdefault@tc##1{\textcolor[rgb]{0.73,0.13,0.13}{##1}}}
\def\csname PYGdefault@tok@nt\endcsname{\let\PYGdefault@bf=\textbf\def\PYGdefault@tc##1{\textcolor[rgb]{0.00,0.50,0.00}{##1}}}
\def\csname PYGdefault@tok@nv\endcsname{\def\PYGdefault@tc##1{\textcolor[rgb]{0.10,0.09,0.49}{##1}}}
\def\csname PYGdefault@tok@s1\endcsname{\def\PYGdefault@tc##1{\textcolor[rgb]{0.73,0.13,0.13}{##1}}}
\def\csname PYGdefault@tok@dl\endcsname{\def\PYGdefault@tc##1{\textcolor[rgb]{0.73,0.13,0.13}{##1}}}
\def\csname PYGdefault@tok@ch\endcsname{\let\PYGdefault@it=\textit\def\PYGdefault@tc##1{\textcolor[rgb]{0.25,0.50,0.50}{##1}}}
\def\csname PYGdefault@tok@m\endcsname{\def\PYGdefault@tc##1{\textcolor[rgb]{0.40,0.40,0.40}{##1}}}
\def\csname PYGdefault@tok@gp\endcsname{\let\PYGdefault@bf=\textbf\def\PYGdefault@tc##1{\textcolor[rgb]{0.00,0.00,0.50}{##1}}}
\def\csname PYGdefault@tok@sh\endcsname{\def\PYGdefault@tc##1{\textcolor[rgb]{0.73,0.13,0.13}{##1}}}
\def\csname PYGdefault@tok@ow\endcsname{\let\PYGdefault@bf=\textbf\def\PYGdefault@tc##1{\textcolor[rgb]{0.67,0.13,1.00}{##1}}}
\def\csname PYGdefault@tok@sx\endcsname{\def\PYGdefault@tc##1{\textcolor[rgb]{0.00,0.50,0.00}{##1}}}
\def\csname PYGdefault@tok@bp\endcsname{\def\PYGdefault@tc##1{\textcolor[rgb]{0.00,0.50,0.00}{##1}}}
\def\csname PYGdefault@tok@c1\endcsname{\let\PYGdefault@it=\textit\def\PYGdefault@tc##1{\textcolor[rgb]{0.25,0.50,0.50}{##1}}}
\def\csname PYGdefault@tok@fm\endcsname{\def\PYGdefault@tc##1{\textcolor[rgb]{0.00,0.00,1.00}{##1}}}
\def\csname PYGdefault@tok@o\endcsname{\def\PYGdefault@tc##1{\textcolor[rgb]{0.40,0.40,0.40}{##1}}}
\def\csname PYGdefault@tok@kc\endcsname{\let\PYGdefault@bf=\textbf\def\PYGdefault@tc##1{\textcolor[rgb]{0.00,0.50,0.00}{##1}}}
\def\csname PYGdefault@tok@c\endcsname{\let\PYGdefault@it=\textit\def\PYGdefault@tc##1{\textcolor[rgb]{0.25,0.50,0.50}{##1}}}
\def\csname PYGdefault@tok@mf\endcsname{\def\PYGdefault@tc##1{\textcolor[rgb]{0.40,0.40,0.40}{##1}}}
\def\csname PYGdefault@tok@err\endcsname{\def\PYGdefault@bc##1{\setlength{\fboxsep}{0pt}\fcolorbox[rgb]{1.00,0.00,0.00}{1,1,1}{\strut ##1}}}
\def\csname PYGdefault@tok@mb\endcsname{\def\PYGdefault@tc##1{\textcolor[rgb]{0.40,0.40,0.40}{##1}}}
\def\csname PYGdefault@tok@ss\endcsname{\def\PYGdefault@tc##1{\textcolor[rgb]{0.10,0.09,0.49}{##1}}}
\def\csname PYGdefault@tok@sr\endcsname{\def\PYGdefault@tc##1{\textcolor[rgb]{0.73,0.40,0.53}{##1}}}
\def\csname PYGdefault@tok@mo\endcsname{\def\PYGdefault@tc##1{\textcolor[rgb]{0.40,0.40,0.40}{##1}}}
\def\csname PYGdefault@tok@kd\endcsname{\let\PYGdefault@bf=\textbf\def\PYGdefault@tc##1{\textcolor[rgb]{0.00,0.50,0.00}{##1}}}
\def\csname PYGdefault@tok@mi\endcsname{\def\PYGdefault@tc##1{\textcolor[rgb]{0.40,0.40,0.40}{##1}}}
\def\csname PYGdefault@tok@kn\endcsname{\let\PYGdefault@bf=\textbf\def\PYGdefault@tc##1{\textcolor[rgb]{0.00,0.50,0.00}{##1}}}
\def\csname PYGdefault@tok@cpf\endcsname{\let\PYGdefault@it=\textit\def\PYGdefault@tc##1{\textcolor[rgb]{0.25,0.50,0.50}{##1}}}
\def\csname PYGdefault@tok@kr\endcsname{\let\PYGdefault@bf=\textbf\def\PYGdefault@tc##1{\textcolor[rgb]{0.00,0.50,0.00}{##1}}}
\def\csname PYGdefault@tok@s\endcsname{\def\PYGdefault@tc##1{\textcolor[rgb]{0.73,0.13,0.13}{##1}}}
\def\csname PYGdefault@tok@kp\endcsname{\def\PYGdefault@tc##1{\textcolor[rgb]{0.00,0.50,0.00}{##1}}}
\def\csname PYGdefault@tok@w\endcsname{\def\PYGdefault@tc##1{\textcolor[rgb]{0.73,0.73,0.73}{##1}}}
\def\csname PYGdefault@tok@kt\endcsname{\def\PYGdefault@tc##1{\textcolor[rgb]{0.69,0.00,0.25}{##1}}}
\def\csname PYGdefault@tok@sc\endcsname{\def\PYGdefault@tc##1{\textcolor[rgb]{0.73,0.13,0.13}{##1}}}
\def\csname PYGdefault@tok@sb\endcsname{\def\PYGdefault@tc##1{\textcolor[rgb]{0.73,0.13,0.13}{##1}}}
\def\csname PYGdefault@tok@sa\endcsname{\def\PYGdefault@tc##1{\textcolor[rgb]{0.73,0.13,0.13}{##1}}}
\def\csname PYGdefault@tok@k\endcsname{\let\PYGdefault@bf=\textbf\def\PYGdefault@tc##1{\textcolor[rgb]{0.00,0.50,0.00}{##1}}}
\def\csname PYGdefault@tok@se\endcsname{\let\PYGdefault@bf=\textbf\def\PYGdefault@tc##1{\textcolor[rgb]{0.73,0.40,0.13}{##1}}}
\def\csname PYGdefault@tok@sd\endcsname{\let\PYGdefault@it=\textit\def\PYGdefault@tc##1{\textcolor[rgb]{0.73,0.13,0.13}{##1}}}
\newcommand{\nats}{\mathbb{N}}
\newcommand{\set}[1]{\{#1\}}
\newcommand{\seq}[1]{\left(#1\right)}
\newcommand{\idx}[1]{\mbox{\underline{\sf #1}}}
\newcommand{\luterm}{$`l`y$\nobreakdash-term\xspace}
\newcommand{\luterms}{$`l`y$\nobreakdash-terms\xspace}
\newcommand{\lterms}{$`l$\nobreakdash-terms\xspace}
\newcommand{\lcalculus}{$`l$\nobreakdash-calculus\xspace}
\newcommand{\lucalculus}{$`l`y$\nobreakdash-calculus\xspace}
\newcommand{\fip}{{\sf FIP}\xspace}
\newcommand{\urg}{$`y$\nobreakdash-{\sf RG}\xspace}
\let\succ\relax
\DeclareMathOperator*{\succ}{\mbox{\sf S}}
\DeclareMathOperator*{\arity}{\mbox{\sf arity}}
\begin{document}
\renewcommand{\sectionautorefname}{Section}

\title{Towards the average-case analysis\\of substitution resolution in
$\lambda$-calculus}
\author{Maciej Bendkowski}
\address{Theoretical Computer Science Department\\
  Faculty of Mathematics and Computer Science\\
  Jagiellonian University\\
  ul. Prof. {\L}ojasiewicza 6, 30-348 Krak\'ow, Poland}
\email{maciej.bendkowski@tcs.uj.edu.pl}
\date{\today}

\begin{abstract}
    Substitution resolution supports the computational character of
    $`b$\nobreakdash-reduction, complementing its execution with a
    capture-avoiding exchange of terms for bound variables. Alas, the
    meta-level definition of substitution, masking a non-trivial
    computation, turns $`b$\nobreakdash-reduction into an atomic
    rewriting rule, despite its varying operational complexity.

    In the current paper we propose a somewhat indirect average-case analysis of
    substitution resolution in the classic \lcalculus, based on the quantitative
    analysis of substitution in $`l`y$, an extension of \lcalculus
    internalising the $`y$\nobreakdash-calculus of explicit substitutions.
    Within this framework, we show that for any fixed $n \geq 0$, the
    probability that a uniformly random, conditioned on size, \luterm
    $`y$\nobreakdash-normalises in $n$ normal-order (i.e.~leftmost-outermost)
    reduction steps tends to a computable limit as the term size tends to infinity.

    For that purpose, we establish an effective hierarchy
    $\seq{\mathscr{G}_n}_n$ of regular tree grammars partitioning
    $`y$\nobreakdash-normalisable terms into classes of terms normalising in $n$
    normal-order rewriting steps. The main technical ingredient in our
    construction is an inductive approach to the construction of
    $\mathscr{G}_{n+1}$ out of $\mathscr{G}_n$ based, in turn, on the
    algorithmic construction of finite intersection partitions, inspired by
    Robinson's unification algorithm.

    Finally, we briefly discuss applications of our approach to other term
    rewriting systems, focusing on two closely related formalisms, i.e.~the full
    \lucalculus and combinatory logic.
\end{abstract}

\maketitle

\section{Introduction}
Traditional, machine-based computational models, such as Turing machines or
RAMs, admit a natural notion of an atomic computation step, closely reflecting
the actual operational cost of executing the represented computations.
Unfortunately, this is not quite the case for computational models based on term
rewriting systems with substitution, such as the classic \lcalculus.  Given the
(traditionally) epitheoretic nature of substitution, the single rewriting rule
of $`b$\nobreakdash-reduction $(`l x. a)b \to_\beta a[x := b]$ masks a
non-trivial computation of resolving (i.e.~executing) the pending substitution
of $b$ for occurrences of $x$ in $a$.  Moreover, unlike machine-based models,
\lcalculus (as other term rewriting systems) does not impose a strict,
deterministic evaluation mechanism.  Consequently, various strategies for
resolving substitutions can be used, even more obfuscating the operational
semantics of $`b$\nobreakdash-reduction and hence also its operational cost.
Those subtle nuances hidden behind the implementation details of substitution
resolution are in fact one of the core issues in establishing reasonable cost
models for the classic \lcalculus, relating it with other, machine-based
computational models, see~\cite{Lawall:1996:OII:232627.232639}.

In order to resolve this apparent inadequacy, Abadi et al.~proposed to refine
substitution in the classic \lcalculus and decompose it into a series of atomic
rewriting steps, internalising in effect the calculus of executing
substitutions~\cite{abadi1991}. Substitutions become first-class citizens and,
in effect, can be manipulated together with regular terms.  Consequently, the
general framework of explicit substitutions provides a machine-independent setup
for the operational semantics of substitution, based on a finite set of unit
rewriting primitives. Remarkably, with the help of linear substitution calculus
(a resource aware calculus of explicit substitutions) Accattoli and dal Lago
showed recently that the leftmost-outermost $`b$\nobreakdash-reduction strategy
is a reasonable invariant cost model for \lcalculus, and hence it is able to
simulate RAMs (or equivalent, machine-based models) within a polynomial time
overhead~\cite{DBLP:journals/corr/AccattoliL16}.

Various subtleties of substitution resolution, reflected in the variety of
available calculi of explicit substitutions, induce different operational
semantics for executing substitutions in \lcalculus. This abundance of
approaches is perhaps the main barrier in establishing a systematic,
quantitative analysis of the operational complexity of substitution resolution
and, among other things, a term rewriting analogue of classic average-case
complexity analysis.  In the current paper we propose a step towards filling
this gap by offering a quantitative approach to substitution resolution in
Lescanne's \lucalculus of explicit substitutions~\cite{Lescanne1994}. In
particular, we focus on the following, average-case analysis type of question. Having
fixed arbitrary non-negative $n$, what is the probability that a (uniformly) random \luterm of
given size is $`y$\nobreakdash-normalisable (i.e.~can be reduced to a normal form
without explicit substitutions) in exactly $n$ leftmost-outermost reduction
steps? Furthermore, how does this probability distribution change with the term
size tending to infinity?

We address the above questions using a two-step approach. First, we exhibit an
effective (i.e.~computable) hierarchy $\seq{\mathscr{G}_n}_n$ of unambiguous
regular tree grammars with the property that $\mathscr{G}_n$ describes the
language of terms $`y$\nobreakdash-normalising in precisely $n$
leftmost-outermost $`y$\nobreakdash-rewriting steps. Next, borrowing techniques
from analytic combinatorics, we analyse the limit proportion of terms
$`y$\nobreakdash-normalising in $n$ normal-order steps. To that end, we
construct appropriate generating functions and provide asymptotic estimates for
the number of \luterms $`y$\nobreakdash-normalising in $n$ normal-order
reduction steps. As a result, we base our approach on a direct quantitative
analysis of the $`y$ term rewriting system, measuring the operational cost of
evaluating substitution in terms of the number of leftmost-outermost rewriting
steps required to reach a ($`y$-)normal form.

The paper is structured as follows. In~\cref{sec:preliminaries} we outline
\lucalculus and the framework of regular tree grammars, establishing the
necessary terminology for the reminder of the paper. Next,
in~\cref{sec:reduction:grammars}, we prepare the background for the construction
of $\seq{\mathscr{G}_n}_n$. In particular, we sketch its general, intuitive
scheme. In~\cref{sec:finite:intersection:partition} we introduce the main tool
of finite intersection partitions and show that it is indeed constructible in
the context of generated reduction grammars. Afterwards,
in~\cref{sec:the:construction}, we show how finite intersection partitions can
be used in the construction of new productions in $\mathscr{G}_{n+1}$ based on
productions in the grammar $\mathscr{G}_n$. Having constructed
$\seq{\mathscr{G}_n}_n$ we then proceed to the main quantitative analysis of
$`y$\nobreakdash-calculus using methods of analytic combinatorics, see~\cref{sec:analytic:combinatorics}.  Finally,
in~\cref{sec:other:applications} we discuss broader applications of our technique
to other term rewriting systems, based on the examples of \lucalculus and combinatory
logic, and conclude the paper in the final~\cref{sec:conclusion}.

\section{Preliminaries}\label{sec:preliminaries}
\subsection{Lambda upsilon calculus}
$`l`y$ (lambda upsilon) is a simple, first-order term rewriting system extending
the classic \lcalculus based on de~Bruijn indices~\cite{deBruijn1972} with the
calculus of resolving pending substitutions~\cite{Lescanne1994,lescanne96}.  Its
formal terms, so-called $`l`y$\nobreakdash-terms, are comprised of de~Bruijn
indices $\idx{n}$, application, abstraction, together with an additional,
explicit \emph{closure} operator $[\cdot]$ standing for unresolved
substitutions. De~Bruijn indices are represented in unary base expansion. In
other words, $\idx{n}$ is encoded as an $n$\nobreakdash-fold application of the
successor operator $\succ$ to zero~$\idx{0}$. Substitutions, in turn, consist of three
primitives, i.e.~a constant \emph{shift} $\uparrow$, a unary \emph{lift} operator
$\Uparrow$, mapping substitutions onto substitutions, and a unary
\emph{slash} operator $/$, mapping terms onto substitutions. Terms containing
closures are called \emph{impure} whereas terms without them are said to be
\emph{pure}.~\cref{fig:lucalculus} summarises the formal specification of
\luterms~and the corresponding rewriting system $`l`y$.

\begin{figure}[hbt!]
\centering
\begin{subfigure}{.45\textwidth}
    \begin{align}\label{eq:luterms:spec}
    \begin{split}
    t &::= \idx{n} ~|~ `l t ~|~ t t ~|~ t [s]\\
    s &::= t/ ~|~ \Uparrow(s) ~|~ \uparrow\\
    \idx{n} &::= \idx{0} ~|~ \succ \idx{n}.
    \end{split}
\end{align}
    \caption{Terms of \lucalculus.}
\label{fig:lambda:ups:specification}
\end{subfigure}%
\begin{subfigure}{.45\textwidth}
    \begin{align}
    (`l a) b &\to a [b/] \tag{Beta}\label{red:beta}\\
    (a b)[s] &\to a [s] (b [s]) \tag{App}\label{red:app}\\
    (`l a)[s] &\to `l(a[\Uparrow (s)]) \tag{Lambda}\label{red:lambda}\\
    \idx{0} [a/] &\to a \tag{FVar}\label{red:fvar}\\
    (\succ \idx{n}) [a/] &\to \idx{n} \tag{RVar}\label{red:rvar}\\
    \idx{0} [\Uparrow(s)] &\to \idx{0} \tag{FVarLift}\label{red:fvarlift}\\
    (\succ \idx{n}) [\Uparrow(s)] &\to \idx{n}[s][\uparrow] \tag{RVarLift}\label{red:rvarlift}\\
    \idx{n}[\uparrow] &\to \succ \idx{n} \tag{VarShift}\label{red:varshift}.
\end{align}
    \caption{Rewriting rules.}
    \label{fig:lambda:ups:rewriting:system}
\end{subfigure}
\caption{The $`l`y$-calculus rewriting system.}
    \label{fig:lucalculus}
\end{figure}

\begin{example}
    Note that the well-known combinator $K = `lx y. x$ is represented in the
    de~Bruijn notation as $`l `l \idx{1}$.  The reverse application term $`lx y.
    y x$, on the other hand, is represented as $`l`l\idx{0} \idx{1}$.
    Consequently, in a single $`b$\nobreakdash-reduction step, it holds
    $\left(`l`l \idx{0} \idx{1}\right) K \to_\beta `l \left(\idx{0} K\right)$.

    In $`l`y$, however, this single $`b$\nobreakdash-reduction is decomposed
    into a series of small rewriting steps governing both the
    $`b$\nobreakdash-reduction as well as the subsequent substitution
    resolution.  For instance, we have
        \begin{align}
            \begin{split}
                \left(`l`l \idx{0} \idx{1}\right) K &\to
            \left(`l \idx{0} \idx{1}\right)[K/] \to
            \left(`l \left(\idx{0} \idx{1}\right)[\Uparrow(K/)]\right) \to
            `l \left(\idx{0}
                [\Uparrow(K/)]\right)\left(\idx{1}[\Uparrow(K/)]\right)\\
                &\to `l \left( \idx{0}\left(\idx{1}[\Uparrow(K/)]\right) \right) \to
                `l \left(\idx{0}\left(\idx{0}[K/][\uparrow]\right)\right) \to
                `l \left(\idx{0} \left(K [\uparrow]\right)\right).
            \end{split}
        \end{align}
    Furthermore,
        \begin{align}
            \begin{split}
                K [\uparrow] &= \left(`l `l \idx{1} \right)[\uparrow]
                \to `l \left((`l \idx{1}) [\Uparrow(\uparrow)]\right) \to
                `l `l \left( \idx{1} [\Uparrow(\Uparrow(\uparrow))]\right)\\
                &\to `l `l \left(\idx{0} [\Uparrow(\uparrow)][\uparrow] \right)
                \to `l `l \left(\idx{0} [\uparrow]\right)\\
                &\to `l `l \idx{1} = K
            \end{split}
        \end{align}
    hence indeed $\left(`l`l \idx{0} \idx{1}\right) K$ rewrites to $`l
    \left(\idx{0} K\right)$.

    Let us notice that in the presence of the erasing~\eqref{red:rvar} and
    duplicating~\eqref{red:app} rewriting rules, not all reduction sequences
    starting with the same term have to be of equal length. Like in the classic
    \lcalculus, depending on the considered term, some rewriting strategies might
    be more efficient then others.
\end{example}

$`l`y$ enjoys a series of pleasing properties. Most notably, $`l`y$ is
confluent, correctly implements $`b$\nobreakdash-reduction of the classic
\lcalculus, and preserves strong normalisation of closed
terms~\cite{benaissa_briaud_lescanne_rouyer-degli_1996}.  Moreover, the $`y$
fragment, i.e.~$`l`y$ without the~\eqref{red:beta} rule, is terminating. In
other words, each \luterm is $`y$\nobreakdash-normalising as can be shown using,
for instance, polynomial interpretations~\cite{10.1007/3-540-55602-8_161}.  In
the current paper we focus on the normal-order (i.e.~leftmost-outermost)
evaluation strategy of $`y$\nobreakdash-reduction.  For convenience, we assume
the following notational conventions. We use lowercase letters $a,b,c,\ldots$ to
denote arbitrary terms and $s$ (with or without subscripts) to denote
substitutions. Moreover, we write $a \downarrow_n$ to denote the fact that $a$
normalises to its $`y$\nobreakdash-normal form in $n$ normal-order
$`y$\nobreakdash-reduction steps. Sometimes, for further convenience, we also
simply state that $t$ normalises in $n$ steps, without specifying the assumed
evaluation strategy nor the specific rewriting steps and normal form.

\subsection{Regular tree languages}
We base our main construction in the framework of regular tree languages.  In
what follows, we outline their basic characteristics and that of corresponding
regular tree grammars, introducing necessary terminology. We refer the curious
reader to~\cite[Chapter II]{tata2007} for a more detailed exposition.

\begin{definition}[Ranked alphabet]
A \emph{ranked alphabet} $\mathscr{F}$ is a finite set of \emph{function
symbols} endowed with a corresponding \emph{arity} function $\arity \colon
\mathscr{F} \to \nats$. We use $\mathscr{F}_n$ to denote the set of function
symbols of arity $n$, i.e.~function symbols $f \in \mathscr{F}$ such that
$\arity(f) = n$. Function symbols of arity zero are called \emph{constants}.
As a notational convention, we use lowercase letters $f,g,h,\ldots$ to denote
    arbitrary function symbols.
\end{definition}

\begin{definition}[Terms]
    Let $X$ be a finite set of \emph{variables}. Then, the set
    $\mathscr{T}_{\mathscr{F}}(X)$ of \emph{terms over $\mathscr{F}$} is defined
    inductively as follows:
\begin{enumerate}
    \item $X, \mathscr{F}_0 \subset \mathscr{T}_{\mathscr{F}}(X)$;
    \item If $f \in \mathscr{F}_n$ and $\alpha_1,\ldots,\alpha_n \in
        \mathscr{T}_{\mathscr{F}}(X)$, then $f(\alpha_1,\ldots,\alpha_n) \in
        \mathscr{T}_{\mathscr{F}}(X)$.
\end{enumerate}
Terms not containing variables, in other words~elements of
    $\mathscr{T}_{\mathscr{F}}(\emptyset)$, are called \emph{ground terms}.

    As a notational convention, we use lowercase Greek letters
    $\alpha,\beta,\gamma,\ldots$ to denote arbitrary terms.  Whenever it is
    clear from the context, we use the word \emph{term} both to refer to the
    above structures as as well as to denote \luterms.
\end{definition}

\begin{definition}[Regular tree grammars]
    A \emph{regular tree grammar} $\mathscr{G} = (S,\mathscr{N},\mathscr{F},\mathscr{P})$
    is a tuple consisting of:
    \begin{enumerate}
        \item an \emph{axiom} $S \in \mathscr{N}$;
        \item a finite set $\mathscr{N}$ of \emph{non-terminal symbols};
        \item a ranked alphabet $\mathscr{F}$ of \emph{terminal symbols} such
            that $\mathscr{F} \cap \mathscr{N} = \emptyset$; and
        \item a finite set $\mathscr{P}$ of \emph{productions} in form of $N \to \alpha$ such
            that $N \in \mathscr{N}$ and $\alpha \in
            \mathscr{T}_{\mathscr{F}}(\mathscr{N})$.
    \end{enumerate}

A production $N \to \alpha$ is \emph{self-referencing} if $N$ occurs in
    $\alpha$. Otherwise, if $N$ does not occur in $\alpha$, we say that $n \to
    \alpha$ is \emph{regular}. As a notational convention, we use capital
    letters $X,Y,Z\ldots$ to denote arbitrary non-terminal symbols.
\end{definition}

\begin{definition}[Derivation relation]
    The \emph{derivation relation} $\to_\mathscr{G}$ associated with the grammar
    $\mathscr{G} = (S,\mathscr{N},\mathscr{F},\mathscr{P})$ is a relation on
    pairs of terms in $\mathscr{T}_{\mathscr{F}}(\mathscr{N})$ satisfying
    $\alpha \to_\mathscr{G} \beta$ iff there exists a production $N \to \gamma$
    in $\mathscr{P}$ such that after substituting $\gamma$ for some occurrence
    of $N$ in $\alpha$ we obtain $\beta$.  Following standard notational
    conventions, we use $\xrightarrow[]{*}_\mathscr{G}$ to denote the transitive closure
    of $\to_\mathscr{G}$. Moreover, if $\mathscr{G}$ is clear from the context,
    we omit it in the subscript of the derivation relations and simply write
    $\to$ and $\xrightarrow[]{*}$.

    A regular tree grammar $\mathscr{G}$ with axiom $S$ is said to be
    \emph{unambiguous} iff for each ground term $\alpha \in
    \mathscr{T}_{\mathscr{F}}(\emptyset)$ there exist at most one
    \emph{derivation sequence} in form of $S \to \gamma_1 \to \cdots \to
    \gamma_n = \alpha$. Likewise, $N$ is said to be
    \emph{unambiguous in $\mathscr{G}$} iff for each ground term $\alpha \in
    \mathscr{T}_{\mathscr{F}}(\emptyset)$ there exist at most one derivation
    sequence in form of $N \to \gamma_1 \to \cdots \to \gamma_n = \alpha$.
\end{definition}

\begin{definition}[Regular tree languages]
    The \emph{language $L(\mathscr{G})$ generated by $\mathscr{G}$} is the set
    of all ground terms $\alpha$ such that $S \xrightarrow[]{*} \alpha$ where $S$ is the
    axiom of $\mathscr{G}$.  Similarly, the \emph{language generated by term
    $\alpha \in \mathscr{T}_{\mathscr{F}}(\mathscr{N})$ in $\mathscr{G}$},
    denoted as $L_\mathscr{G}(\alpha)$, is the set of of all ground terms
    $\beta$ such that $\alpha \xrightarrow[]{*} \beta$.  Finally, a set $\mathscr{L}$ of
    ground terms over a ranked alphabet $\mathscr{F}$ is said to be a
    \emph{regular tree language} if there exists a regular tree grammar
    $\mathscr{G}$ such that $L(\mathscr{G}) = \mathscr{L}$.
\end{definition}

\begin{example}\label{eg:luterms:grammar}
    The set of \luterms is an example of a regular tree language. The
    corresponding regular tree grammar $\Lambda = (T,
    \mathscr{N}, \mathscr{F}, \mathscr{P})$ consists of
    \begin{itemize}
    \item a set $\mathscr{N}$ of three non-terminal symbols $T$, $S$, $N$
         intended to stand for \luterms, substitutions, and de~Bruijn indices,
            respectively, with $T$ being the axiom of $\Lambda$;
        \item a set $\mathscr{F}$ of terminal symbols, comprised of all
            the symbols of the \lucalculus language, i.e.~term application and
            abstraction, closure $\cdot[\cdot]$, slash $\cdot/$, lift
            $\Uparrow(\cdot)$ and shift $\uparrow$ operators, and the
            successor $\succ(\cdot)$ with the constant $\idx{0}$; and
        \item a set $\mathscr{P}$ of productions
            \begin{align}\label{eq:luterms:tree:grammar}
                \begin{split}
                    T &\to N ~|~ `l T ~|~ T T ~|~ T [S]\\
                    S &\to T/ ~|~ \Uparrow(S) ~|~ \uparrow\\
                    N &\to \idx{0} ~|~ \succ N.
                \end{split}
            \end{align}
    \end{itemize}

    Let us notice that~\eqref{eq:luterms:tree:grammar} consists of five
    self-referencing productions, three for $T$ and one for each $S$ and $N$.
    Moreover, $L(N) \subset L(T)$ as $\mathscr{P}$ includes a production $T \to
    N$.
\end{example}

\section{Reduction grammars}\label{sec:reduction:grammars}

We conduct our construction of $\seq{\mathscr{G}_n}_n$ in an inductive,
incremental fashion. Starting with $\mathscr{G}_0$ corresponding to the set of
\emph{pure terms} (i.e.~\luterms without closures) we
build the $(n+1)$st grammar $\mathscr{G}_{n+1}$ based on the structure of the
$n$th grammar $\mathscr{G}_n$.  First-order rewriting rules of \lucalculus
guarantee a close structural resemblance of both their left- and right-hand
sides, see~\autoref{fig:lambda:ups:rewriting:system}.  Consequently, with
$\mathscr{G}_n$ at hand, we can analyse the right-hand sides of $`y$ rewriting
rules and match them with productions of $\mathscr{G}_n$. Based on their
structure, we then determine the structure
of productions of $\mathscr{G}_{n+1}$ which correspond to respective left-hand
sides. Although such a general idea of constructing $\mathscr{G}_{n+1}$ out of
$\mathscr{G}_n$ is quite straightforward, its implementation requires some
careful amount of detail.

For that reason, we make the following initial preparations. Each grammar
$\mathscr{G}_n$ uses the same, global, ranked alphabet $\mathscr{F}$
corresponding to \lucalculus, see~\cref{eg:luterms:grammar}. The standard
non-terminal symbols $T, S$, and $N$, together with their respective
productions~\eqref{eq:luterms:tree:grammar}, are \emph{pre-defined} in each
grammar $\mathscr{G}_n$.  In addition, $\mathscr{G}_n$ includes $n+1$
non-terminal symbols $G_0,\ldots,G_n$ (the final one being the axiom of
$\mathscr{G}_n$) with the intended meaning that for all $0 \leq k \leq n$, the
language $L_{\mathscr{G}_k}(G_n)$ is equal to the set of terms
$`y$\nobreakdash-normalising in $k$ normal-order steps. In this manner,
building $\seq{\mathscr{G}_n}_n$ amounts to a careful, incremental extension
process, starting with the initial grammar $\mathscr{G}_0$ comprised of the
following extra, i.e.~not included in~\eqref{eq:luterms:tree:grammar},
productions:
\begin{align}
    \begin{split}
    G_0 &\to N~|~`l G_0~|~G_0 G_0\\
    N &\to \succ N~|~\idx{0}.
    \end{split}
\end{align}

In order to formalise the above preparations and the overall presentation of our
construction, we introduce the following, abstract notions of
$`y$\nobreakdash-reduction grammar and later also their simple variants.

\begin{definition}[$`y$-reduction grammars]
    Let $\Lambda = (T, \mathscr{N}, \mathscr{F}, \mathscr{P})$ be the regular
    tree grammar corresponding to \luterms, see~\cref{eg:luterms:grammar}.
    Then, the regular tree grammar
    \begin{equation}
        \mathscr{G}_n = (G_n, \mathscr{N}_n, \mathscr{F}, \mathscr{P}_n)
    \end{equation}
        with
        \begin{itemize}
        \item $\mathscr{N}_n = \mathscr{N} \cup \set{G_0, G_1,\ldots,G_n}$; and
        \item $\mathscr{P}_n$ being a set of productions such that
            $\mathscr{P} \subset \mathscr{P}_n$
        \end{itemize}
        is said to be a \emph{$`y$\nobreakdash-reduction grammar}, \urg in
        short, if for all $0 \leq k \leq n$, the non-terminal $G_k$ is unambiguous in
        $\mathscr{G}_n$, and moreover $L(G_k)$ is equal to the set of \luterms
        $`y$\nobreakdash-normalising in $k$ normal-order $`y$\nobreakdash-steps.
\end{definition}

\begin{definition}[Partial order of sorts]
    The \emph{partial order of sorts} $(\mathscr{N}_n, \preceq)$ is a partial
    order (i.e.~reflexive, transitive, and anti-symmetric relation) on
    non-terminal symbols $\mathscr{N}_n$ satisfying $X \preceq Y$ if and only if
    $L_{\mathscr{G}_n}(X) \subseteq L_{\mathscr{G}_n}(Y)$. For convenience, we
    write $X \varcurlywedge Y$ to denote the greatest lower bound of
    $\set{X,Y}$.~\cref{fig:partial:order:diagram} depicts the partial order
    $(\mathscr{N}_n,\preceq)$.
\begin{figure}[th!]
\centering
\begin{tikzpicture}[every level/.style={sibling distance=20mm/#1}]
    \node [circle,draw] (t){$T$}
        child {node [circle,draw] {$G_0$}
            child {node [circle,draw] {$N$}}}
        child {node [circle,draw] {$G_1$}}
        child {node [circle,draw] {$G_2$}}
        child {node {$\cdots$}}
        child {node [circle,draw] {$G_n$}};
\end{tikzpicture}
\caption{Hasse diagram of the partial order $(\mathscr{N}_n, \preceq)$.}
    \label{fig:partial:order:diagram}
\end{figure}
\end{definition}

\begin{remark}\label{rem:sort:poset}
    Let us notice that given the interpretation of $L(G_0),\ldots,L(G_n)$, the
    partial order of sorts $(\mathscr{N}_n, \preceq)$ captures all the
    inclusions among the non-terminal languages within $\mathscr{G}_n$. However,
    in addition, if $X$ and $Y$ are not comparable through $\preceq$, then $L(X)
    \cap L(Y) = \emptyset$ as each term $`y$\nobreakdash-normalises in a unique,
    determined number of steps.
\end{remark}

\begin{definition}[Simple $`y$-reduction grammars]
    A \urg $\mathscr{G}_n$ is said to be \emph{simple} if all its
    self-referencing productions are either productions of the regular tree
    grammar corresponding to \luterms, see~\cref{eg:luterms:grammar}, or are of
    the form
    \begin{equation}
    G_k \to `l G_k~|~G_0 G_k~|~G_k G_0
    \end{equation}
    and moreover, for all regular productions in form of $G_{k} \to \alpha$ in
    $\mathscr{G}_n$, it holds $\alpha \in \mathscr{T}_{\mathscr{F}}(\mathscr{N}
    \cup \set{G_0,\ldots,G_{k-1}})$, i.e.~$\alpha$ does not reference
    non-terminals other than $G_0,\ldots,G_{k-1}$.
\end{definition}

\begin{remark}
Let us note that, in general, $`y$-reduction grammars do not have to be simple.
    Due to the erasing~\eqref{red:rvar} rewriting rule, it is possible to
    construct, \emph{inter alia}, more involved self-referencing productions.
    Nonetheless, for technical convenience, we will maintain the simplicity of
    constructed grammars $\seq{\mathscr{G}_n}_n$.

    Also, let us remark that the above definition of simple
    $`y$\nobreakdash-reduction grammars asserts that if $\mathscr{G}_{n+1}$ is a
    simple \urg, then, by a suitable truncation, it is possible to obtain all
    the $`y$\nobreakdash-reduction grammars $\mathscr{G}_0$ up to
    $\mathscr{G}_n$.  Consequently, $\mathscr{G}_{n+1}$ contains, in a proper
    sense, all the proceeding grammars $\mathscr{G}_0,\ldots,\mathscr{G}_n$.
\end{remark}

\section{Finite intersection
partitions}\label{sec:finite:intersection:partition}

The main technical ingredient in our construction of $\seq{\mathscr{G}_n}_n$ are
finite intersection partitions.

\begin{definition}[Finite intersection partition]
    Let $\alpha, \beta$ be two terms in $\mathscr{T}_{\mathscr{F}}(X)$.  Then,
    a \emph{finite intersection partition}, \fip in short, of $\alpha$ and
    $\beta$ is a finite set $\Pi(\alpha,\beta) = \set{\pi_1,\ldots,\pi_n}
    \subset \mathscr{T}_{\mathscr{F}}(X)$ such that $L(\pi_i) \cap L(\pi_j) =
    \emptyset$ whenever $i \neq j$, and $\bigcup_i L(\pi_i) = L(\alpha) \cap
    L(\beta)$.
\end{definition}

Let us note that, \emph{a priori}, it is not immediately clear if
$\Pi(\alpha,\beta)$ exists for $\alpha$ and $\beta$ in the term algebra
$\mathscr{T}(\mathscr{N}_n)$ associated with the simple \urg $\mathscr{G}_n$ nor
whether there exists an algorithmic method of its construction. The following
result states that both questions can be settled in the affirmative.

\begin{lemma}[Constructible finite intersection partitions]\label{lem:computable:fip}
Assume that $\mathscr{G}_n$ is a simple $`y$\nobreakdash-reduction grammar.  Let
    $\alpha,\beta$ be two (not necessarily ground) terms in
    $\mathscr{T}(\mathscr{N}_n)$ where $\mathscr{N}_n$ is the set of
    non-terminal symbols of $\mathscr{G}_n$. Then, $\alpha$ and $\beta$ have a
    computable finite intersection partition $\Pi(\alpha,\beta)$.
\end{lemma}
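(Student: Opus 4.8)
The plan is to construct $\Pi(\alpha,\beta)$ by simultaneous structural recursion on the pair $(\alpha,\beta)$, branching on the two head symbols. This mirrors Robinson's unification in that one descends into matching subterms and fails (returns the empty partition) as soon as the heads are incompatible; the essential difference is that the ``variables'' here are non-terminals denoting whole languages rather than single unknowns, so a ``failure to unify'' two non-terminals is replaced by the sharper dichotomy provided by the partial order of sorts.

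First I would settle the base case in which both arguments are non-terminals, say $\alpha = X$ and $\beta = Y$. By \cref{rem:sort:poset} the languages $L_{\mathscr{G}_n}(X)$ and $L_{\mathscr{G}_n}(Y)$ are either nested or disjoint, so I set $\Pi(X,Y) = \set{X \varcurlywedge Y}$ whenever $X$ and $Y$ are $\preceq$-comparable (the meet then being the smaller one, which generates exactly the intersection) and $\Pi(X,Y) = \emptyset$ otherwise. Next comes the case of two function symbols, $\alpha = f(\alpha_1,\ldots,\alpha_k)$ and $\beta = g(\beta_1,\ldots,\beta_m)$: since every ground term has a unique head, $L(\alpha)\cap L(\beta) = \emptyset$ unless $f = g$ (hence $k = m$), in which case a ground term lies in the intersection precisely when its $i$th argument lies in $L(\alpha_i)\cap L(\beta_i)$ for each $i$. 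I would thus recursively obtain the $\Pi(\alpha_i,\beta_i)$ and take $\Pi(\alpha,\beta)$ to be the set of all $f(\pi_1,\ldots,\pi_k)$ with $\pi_i \in \Pi(\alpha_i,\beta_i)$, which is empty as soon as one factor is. Disjointness of this product is immediate: two distinct tuples differ in a coordinate whose languages are disjoint by induction, and unique readability lifts this to the compound terms.

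The remaining, mixed case has $\alpha = X$ a non-terminal and $\beta$ function-headed (or symmetrically). Here I would expand $X$ through its productions via $L(X) = \bigcup_{X \to \gamma} L(\gamma)$, recursively compute $\Pi(\gamma,\beta)$ for each $X \to \gamma$, and return the union. Pairwise disjointness across productions follows from the unambiguity of $X$ in $\mathscr{G}_n$ (which holds for $G_0,\ldots,G_n$ by the \urg hypothesis and for $T,S,N$ by unique readability of their productions), since distinct productions of an unambiguous non-terminal generate disjoint languages. With the three cases in place, the two defining \fip conditions follow by a routine induction: the covering identity $\bigcup_i L(\pi_i) = L(\alpha)\cap L(\beta)$ is exactly the case distinction above, and disjointness is assembled from the base case, the product argument, and unambiguity.

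The crux, and the step I expect to be the main obstacle, is termination of the recursion, and this is precisely where simplicity of $\mathscr{G}_n$ is indispensable. The only move that need not shrink $\beta$ is the expansion of a non-terminal, so I would control the recursion with the lexicographic measure $(\lvert\beta\rvert, r(\alpha))$, where $\lvert\beta\rvert$ counts the nodes of $\beta$ and $r(\alpha)$ bounds the length of a chain of unit productions issuing from the head non-terminal of $\alpha$. Simplicity guarantees that every self-referencing production is function-headed (one of the $\Lambda$-productions, or $G_k \to `l G_k \mid G_0 G_k \mid G_k G_0$), so expanding such a production and matching against the head of $\beta$ forces a descent into proper subterms $\beta_i$, strictly decreasing $\lvert\beta\rvert$; it also guarantees that every remaining regular production of $G_k$ references only $N$ and $G_0,\ldots,G_{k-1}$, so unit productions strictly decrease the well-founded rank $r$ while leaving $\lvert\beta\rvert$ fixed. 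Were self-referencing unit productions such as $X \to X$ permitted, this argument would collapse; simplicity is exactly the hypothesis that rules them out. No infinite descent is therefore possible, the recursion halts, and since each step is manifestly effective, the resulting $\Pi(\alpha,\beta)$ is computable.
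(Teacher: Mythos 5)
Your case analysis and correctness argument coincide with the paper's: the same three-way split on head symbols, the sort poset (via \cref{rem:sort:poset}) for the non-terminal/non-terminal case, the product construction for matching function symbols, and the union over productions with disjointness supplied by unambiguity. The genuine gap is in the termination argument, which you rightly single out as the crux. The lexicographic measure $(\lvert\beta\rvert, r(\alpha))$ does not strictly decrease along every recursive call, because the roles of the two arguments are not stable under the descent: when two function-headed terms with the same head are decomposed, the recursion may land on a pair $(\alpha_i,\beta_i)$ in which $\alpha_i$ is a large function-headed term and $\beta_i$ is a non-terminal. The next step must expand $\beta_i$, so the side tracked by $\lvert\cdot\rvert$ grows from $1$ to the size of a production right-hand side (or, if you first flip the arguments, the first component jumps from $1$ to $\lvert\alpha_i\rvert$, which need not be smaller than the previous value of $\lvert\beta\rvert$). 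Concretely, starting from $(G_1, T[S])$ and expanding the production $G_1 \to (\succ N)[T/]$ of $\mathscr{G}_1$, the recursion reaches the pair $(\succ N,\, T)$, whose measure is already the minimal $(1,0)$, yet the algorithm still has genuine work to do ($L(\succ N)\cap L(T)$ is non-empty); whichever way you orient the subsequent expansion of $T$, the measure fails to decrease. Your rank $r$ does not rescue this, since it only charges for chains of \emph{unit} productions, whereas non-unit regular productions such as $S \to T/$ or $G_1 \to (\succ N)[T/]$ also replace a size-one non-terminal by a strictly larger term sitting in the position your first component is watching.

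The paper's remedy is a measure that is insensitive to which side the non-terminal occupies: a \emph{potential} $\pi$ assigning to each non-terminal $X$ the value $1+\max$ of the potentials of the right-hand sides of its \emph{regular} productions, so that every regular expansion strictly decreases the sum $\pi(\alpha)+\pi(\beta)$, while self-referencing expansions --- which by simplicity are all function-headed and \emph{conservative}, i.e.\ have arguments of potential bounded by $\pi(X)$ --- are treated exactly as your compound step, the forced descent into a proper subterm on the opposite side paying for the growth. If you wish to keep a two-component measure you would have to make it symmetric in $\alpha$ and $\beta$ and let the rank dominate the expansion of \emph{all} regular productions, not just unit ones; at that point you will have essentially reconstructed the paper's potential. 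The rest of your argument stands once termination is repaired.
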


In order to prove~\cref{lem:computable:fip}, we resort to the following technical
notion of \emph{term potential} used to embed terms into the well-founded set of
natural numbers.

\begin{definition}[Term potential]\label{def:term:potential}
    Let $\alpha \in \mathscr{T}(\mathscr{N}_n)$ be a term and $\mathscr{G}_n$ be
    a simple \urg. Let ${\sf Prod}_{\mathscr{G}_n}(X)$ denote the set consisting of right-hand
    sides of regular productions in form of $X \to \beta$ in $\mathscr{G}_n$.
    Then, the \emph{potential} $\pi(\alpha)$ of $\alpha$ in $\mathscr{G}_n$ is
    defined inductively as follows:
    \begin{itemize}
        \item If $\alpha = f(\alpha_1,\ldots,\alpha_m)$, then
            $\pi(\alpha) = 1 + \displaystyle\sum_{i=1}^{m} \pi(\alpha_i)$;
        \item If $\alpha = X \in \set{T,S,N}$, then
            $\pi(\alpha) = 1 + \displaystyle \max \set{\pi(\gamma)~|~\gamma \in
            {\sf Prod}_{\mathscr{G}_n}(X)}$;
        \item If $\alpha = G_k$ for some $0 \leq k \leq n$, then $\pi(\alpha) = 1 + \displaystyle \max
            \set{\pi(\gamma)~|~ \gamma \in \bigcup_{i=0}^k
                {\sf Prod}_{\mathscr{G}_n}(G_i)}$.
    \end{itemize}

    Let us note that $\pi$ is well-defined as, by assumption, ${\sf
    Prod}_{\mathscr{G}_n}(X) \neq \emptyset$
    for all simple $`y$\nobreakdash-reduction grammars $\mathscr{G}_n$;
    otherwise $L(G_k)$ could not span the whole set of \luterms
    $`y$\nobreakdash-normalising in $k$ normal-order steps.
    Moreover, $\pi$ has the following crucial properties:
    \begin{itemize}
    \item For each term $\alpha$ we have $\pi(\alpha) \geq 1$;
    \item If $\alpha$ is a proper subterm of $\beta$, then $\pi(\alpha) <
        \pi(\beta)$;
    \item If $X \to \alpha$ is a regular production,
        then $\pi(\alpha) < \pi(X)$; and
    \item For each $G_i$ and $G_j$, it holds $\pi(G_i) \leq \pi(G_j)$ whenever $i
        \leq j$.
    \end{itemize}
\end{definition}

\begin{example}
Note that the term potential of $N$ associated with de~Bruijn indices is equal
    to ${\pi(N) = 2}$ as $\pi(\idx{0}) = 1$. Since $T \to N$ is the single
    regular production starting with $T$ on its left-hand side, the potential
    $\pi(T)$ is therefore equal to $3$. Consequently, we also have $\pi(S) = 5$
    as witnessed by the regular production $S \to T /$. Finally, since $\pi(N) =
    2$ it holds $\pi(G_0) = 3$ and so, for instance, we also have $\pi(G_0 G_0)
    = 7$.
\end{example}

Productions of a simple $\mathscr{G}_n$ cannot reference non-terminals other
than $G_0,\ldots,G_n$. Since the potential of $G_{k+1}$ is defined in terms of the
potential of its regular productions, this means that $\pi(G_{k+1})$ depends, in an
implicit manner, on the potentials $\pi(G_0),\ldots,\pi(G_{k})$. Note that
this constitutes a traditional inductive definition. In order to compute the
potential of a given term $\alpha$, we start with computing the potential of
associated non-terminals. In particular, we find the values
$\pi(G_0),\ldots,\pi(G_n)$ in ascending order. Afterwards, we can recursively
decompose $\alpha$ and calculate its potential based on the potential of
non-terminal symbols occurring in $\alpha$. Note that the same scheme holds, in
particular, for the right-hand sides of self-referencing productions.

\begin{definition}[Conservative productions]
    A self-referencing production $X \to f(\alpha_1,\ldots,\alpha_n)$ is said to be
    \emph{conservative} if $\pi(\alpha_i) \leq \pi(X)$ for all $1 \leq i \leq
    n$.
\end{definition}

\begin{remark}\label{rem:conservative:productions}
    Conservative productions play a central role in the algorithmic cosntruction
    of finite intersection partitions $\Pi(\alpha,\beta)$. In particular, let us
    remark that all self-referencing productions of a simple \urg
    $\mathscr{G}_n$, as listed in~\cref{fig:self:referencing:prods}, are at the
    same time conservative.
\begin{figure}[hbt!]
\centering
\begin{subfigure}{.45\textwidth}
    \begin{align*}\label{eq:luterms:spec}
    \begin{split}
        G_k &\to `l G_k~|~G_0 G_k~|~G_k G_0\\
        T &\to `l T~|~T T~|~T[S]\\
    \end{split}
\end{align*}
\end{subfigure}%
\begin{subfigure}{.45\textwidth}
    \begin{align*}
    \begin{split}
        S &\to~\Uparrow(S)\\
        N &\to \succ N
    \end{split}
    \end{align*}
\end{subfigure}
    \caption{Self-referencing productions of simple $`y$\nobreakdash-reduction
    grammars.}\label{fig:self:referencing:prods}
\end{figure}
\end{remark}

With the technical notions of term potential and conservative productions, we
are ready to present a recursive procedure $\texttt{fip}_k$ constructing
$\Pi(\alpha,\beta)$ for arbitrary terms $\alpha,\beta$ within the scope of a
simple \urg $\mathscr{G}_k$.~\cref{fig:fip:code} provides the functional
pseudocode describing $\texttt{fip}_k$.

\begin{figure}[ht]
\begin{Verbatim}[commandchars=\\\{\},codes={\catcode`\$=3\catcode`\^=7\catcode`\_=8}]
    \PYG{k}{fun} \PYG{n}{fip}\PYG{esc}{$_{k}$} \PYG{esc}{$\alpha$} \PYG{esc}{$\beta$} \PYG{o}{:=}
        \PYG{k}{match} \PYG{esc}{$\alpha$}\PYG{o}{,} \PYG{esc}{$\beta$} \PYG{k}{with}
        \PYG{o}{|} \PYG{esc}{$f(\alpha_1,\ldots,\alpha_n)$}\PYG{o}{,} \PYG{esc}{$g(\beta_1,\ldots,\beta_m)$} \PYG{esc}{$\Rightarrow$}
            \PYG{k}{if} \PYG{esc}{$f \neq g \lor n \neq m$} \PYG{k}{then} \PYG{esc}{$\emptyset$} \PYG{c}{(*}\PYG{c}{ symbol clash }\PYG{c}{*)} \PYG{esc}{$\label{code:line:4}$}
            \PYG{k}{else} \PYG{k}{if} \PYG{esc}{$n = m = 0$} \PYG{k}{then} \PYG{esc}{$\set{f}$}\PYG{esc}{$\label{code:line:5}$}
            \PYG{k}{else} \PYG{k}{let} \PYG{esc}{$\Pi_i$} \PYG{o}{:=} \PYG{n}{fip}\PYG{esc}{$_{k}$} \PYG{esc}{$\alpha_i$} \PYG{esc}{$\beta_i$}\PYG{o}{,} \PYG{esc}{$\forall$} \PYG{esc}{$1 \leq i \leq n$}\PYG{esc}{$\label{code:line:6}$}
                    \PYG{k}{in} \PYG{esc}{$\set{f(\pi_1,\ldots,\pi_n)\, |\, (\pi_1,\ldots,\pi_n) \in \Pi_1 \times \cdots \times \Pi_n}$}\PYG{esc}{$\label{code:line:7}$}

        \PYG{o}{|} \PYG{esc}{$X$}\PYG{o}{,} \PYG{esc}{$Y$} \PYG{esc}{$\Rightarrow$}\PYG{esc}{$\label{code:line:9}$}
            \PYG{esc}{$\set{X \varcurlywedge Y\, |\, X \preceq Y \lor Y \preceq X}$}\PYG{esc}{$\label{code:line:10}$}

        \PYG{o}{|} \PYG{esc}{$f(\alpha_1,\ldots,\alpha_n)$}\PYG{o}{,} \PYG{esc}{$X$} \PYG{esc}{$\Rightarrow$}
            \PYG{n}{fip}\PYG{esc}{$_{k}$} \PYG{esc}{$X$} \PYG{esc}{$f(\alpha_1,\ldots,\alpha_n)$} \PYG{c}{(*}\PYG{c}{ flip arguments }\PYG{c}{*)}\PYG{esc}{$\label{code:line:13}$}

        \PYG{o}{|} \PYG{esc}{$X$}\PYG{o}{,} \PYG{esc}{$f(\alpha_1,\ldots,\alpha_n)$} \PYG{esc}{$\Rightarrow$}
            \PYG{k}{let} \PYG{esc}{$\Pi_{\gamma}$} \PYG{o}{:=} \PYG{n}{fip}\PYG{esc}{$_{k}$} \PYG{esc}{$\gamma$} \PYG{esc}{$f(\alpha_1,\ldots,\alpha_n)$}\PYG{o}{,} \PYG{esc}{$\forall$} \PYG{esc}{$(X \to \gamma) \in \mathscr{G}_k$}\PYG{esc}{$\label{code:line:16}$}
                \PYG{k}{in} \PYG{esc}{$\bigcup_{(X \to \gamma) \in \mathscr{G}_k} \Pi_{\gamma}$}\PYG{esc}{$\label{code:line:17}$}
    \PYG{k}{end}\PYG{o}{.}
\end{Verbatim}
%\begin{minted}[xleftmargin=20pt,linenos,escapeinside=@@,mathescape=true]{coq}
    %fun fip@$_{k}$@ @$\alpha$@ @$\beta$@ :=
        %match @$\alpha$@, @$\beta$@ with
        %| @$f(\alpha_1,\ldots,\alpha_n)$@, @$g(\beta_1,\ldots,\beta_m)$@ @$\Rightarrow$@
            %if @$f \neq g \lor n \neq m$@ then @$\emptyset$@ (* symbol clash *) @$\label{code:line:4}$@
            %else if @$n = m = 0$@ then @$\set{f}$@@$\label{code:line:5}$@
            %else let @$\Pi_i$@ := fip@$_{k}$@ @$\alpha_i$@ @$\beta_i$@, @$\forall$@ @$1 \leq i \leq n$@@$\label{code:line:6}$@
                    %in @$\set{f(\pi_1,\ldots,\pi_n)\, |\, (\pi_1,\ldots,\pi_n) \in \Pi_1 \times \cdots \times \Pi_n}$@@$\label{code:line:7}$@

        %| @$X$@, @$Y$@ @$\Rightarrow$@@$\label{code:line:9}$@
            %@$\set{X \varcurlywedge Y\, |\, X \preceq Y \lor Y \preceq X}$@@$\label{code:line:10}$@

        %| @$f(\alpha_1,\ldots,\alpha_n)$@, @$X$@ @$\Rightarrow$@
            %fip@$_{k}$@ @$X$@ @$f(\alpha_1,\ldots,\alpha_n)$@ (* flip arguments *)@$\label{code:line:13}$@

        %| @$X$@, @$f(\alpha_1,\ldots,\alpha_n)$@ @$\Rightarrow$@
            %let @$\Pi_{\gamma}$@ := fip@$_{k}$@ @$\gamma$@ @$f(\alpha_1,\ldots,\alpha_n)$@, @$\forall$@ @$(X \to \gamma) \in \mathscr{G}_k$@@$\label{code:line:16}$@
                %in @$\bigcup_{(X \to \gamma) \in \mathscr{G}_k} \Pi_{\gamma}$@@$\label{code:line:17}$@
    %end.
%\end{minted}
    \caption{Pseudocode of the $\texttt{fip}_k$ procedure computing
    $\Pi(\alpha,\beta)$.}\label{fig:fip:code}
\end{figure}

\begin{proof}{(of~\cref{lem:computable:fip})}
Induction over the total potential of $\alpha$ and $\beta$.

Let us start with the base case $\pi(\alpha) + \pi(\beta) = 2$. Note that both
    $\alpha$ and $\beta$ have to be constant ground terms (the potential of
    non-terminals in $\mathscr{N}_k$ is at least $2$). If $\alpha \neq \beta$,
    then certainly $L(\alpha) \cap L(\beta) = \emptyset$ and so
    $\Pi(\alpha,\beta) = \emptyset$, see~\lref{code:line:4}. Otherwise if
    $\alpha = \beta$, then both $L(\alpha) \cap L(\beta) = \Pi(\alpha,\beta) =
    \set{\alpha} = \set{\beta}$; hence, $\texttt{fip}_k$ returns a correct
    intersection partition $\Pi(\alpha, \beta)$, see~\lref{code:line:5}.  And
    so, assume that $\pi(\alpha) + \pi(\beta) > 2$.  Depending on the joint
    structure of both $\alpha$ and $\beta$ we have to consider three cases.

    \paragraph{\textbf{Case 1.}}
    Suppose that $\alpha = f(\alpha_1,\ldots,\alpha_n)$ and $\beta=
    g(\beta_1,\ldots,\beta_m)$ for some function symbols $f$ and $g$ with
    $n,m \geq 0$.  Certainly, if either $f \neq g$ or $n \neq m$, then
    $L(\alpha) \cap L(\beta) = \emptyset$. In consequence, $\emptyset$ is
    the sole valid \fip of both $\alpha$ and $\beta$, see~\lref{code:line:4}.

    So, let us assume that both $f = g$ and $n = m$. Moreover, we can also
    assume that $n,m \geq 1$ as the trivial case $n = m = 0$ cannot occur under
    the working assumption $\pi(\alpha) + \pi(\beta) > 2$. Take an arbitrary
    $\delta \in L(\alpha) \cap L(\beta)$.  Note that $\delta$ takes the form of
    $\delta = f(\delta_1,\ldots,\delta_n)$ for some ground terms
    $\delta_1,\ldots,\delta_n$. By induction, for all $1 \leq i \leq n$, the
    recursive call $\texttt{fip}_k\, \alpha_i\, \beta_i$ yields a finite
    intersection partition $\Pi(\alpha_i,\beta_i)$ of $\alpha_i$ and $\beta_i$.
    Since $\delta_i \in L(\alpha_i) \cap L(\beta_i)$ there exists a unique
    $\pi_i \in \Pi(\alpha_i,\beta_i)$ such that $\delta_i \in L(\pi_i)$.
    Accordingly, for $\delta = f(\delta_1,\ldots,\delta_n)$ there exists a
    unique term $\pi = f(\pi_1,\ldots,\pi_n)$ in $\texttt{fip}_k\, \alpha\,
    \beta$ such that $\delta \in L(\pi)$.

    Conversely, take an arbitrary ground term $\delta =
    f(\delta_1,\ldots,\delta_n) \in L(\pi)$ for some $\pi \in \texttt{fip}_k\,
    \alpha\, \beta$. Note that $\pi$ takes the form $\pi =
    f(\pi_1,\ldots,\pi_n)$, see~\lref{code:line:7}. Since $\delta \in L(\pi)$,
    we know that $\delta_i \in L(\pi_i)$, for each $1\leq i \leq n$.  Moreover,
    $\pi_i \in \Pi_i$ by the construction of $\texttt{fip}_k\, \alpha\, \beta$,
    see~\lref{code:line:6}.  Following the inductive hypothesis that $\Pi_i$ is
    a \fip of $\alpha_i$ and $\beta_i$, we notice that $L(\pi_i) \subseteq
    L(\alpha_i) \cap L(\beta_i)$.  Consequently, $\delta_i \in L(\alpha_i) \cap
    L(\beta_i)$ and so $\delta \in L(\alpha) \cap L(\beta)$.

    \paragraph{\textbf{Case 2.}}
    Suppose that $\alpha = X$ and $\beta = Y$ are two non-terminal symbols in
    $\mathscr{N}_k$, see~\lref{code:line:9}. Let us consider the sort poset
    $(\mathscr{N}_k,\preceq)$ associated with $\mathscr{G}_k$.  Assume that $X$
    and $Y$ are comparable through $\preceq$ (w.l.o.g.~let $X \preceq Y$).
    Consequently, $L(X) \subseteq L(Y)$ and so $X \varcurlywedge Y = X$. Clearly,
    $\set{X}$ is a valid \fip, see~\lref{code:line:10}. On the other hand, if
    $X$ and $Y$ are incomparable in the sort poset associated with
    $\mathscr{G}_k$, it means that $L(X) \cap L(Y) = \emptyset$ and so
    $\Pi(\alpha,\beta) = \emptyset$, see~\cref{rem:sort:poset}.

    \paragraph{\textbf{Case 3.}}
    Suppose w.l.o.g~that $\alpha = X$ and $\beta$ takes the
    form $\beta = f(\beta_1,\ldots,\beta_n)$ with $n \geq 0$. Note that
    $\texttt{fip}_k$ flips its arguments if necessary, see~\lref{code:line:13}.
    Take an arbitrary $\delta \in L(\alpha) \cap L(\beta)$.  Note that
    from the form of $\beta$ we know that $\delta = f(\delta_1,\ldots,\delta_n)$
    for some ground terms $\delta_1,\ldots,\delta_n$ ($n \geq 0$). Since $\alpha = X$
    is a non-terminal symbol which, by assumption, is unambiguous in
    $\mathscr{G}_k$, there exists a unique production $X \to \gamma$
    such that $\delta \in L(\gamma)$.

    If $X \to \gamma$ is regular (i.e.~$X$ does not occur in $\gamma$),
    then $\pi(\gamma) < \pi(X)$ and so $\pi(\gamma) + \pi(\beta) <
    \pi(X) + \pi(\beta)$. Hence, by induction, $\texttt{fip}_k\, \gamma\,
    \beta$ constructs a finite intersection partition
    $\Pi(\gamma,\beta)$ with a unique $\pi \in
    \Pi(\gamma,\beta) \subset \Pi(X,\beta)$ such that $\delta \in
    L(\pi)$, see~\lref{code:line:17}.

    Let us therefore assume that $X \to \gamma$ is not regular, but instead
    self-referencing (i.e.~$X$ occurs in $\gamma$). In such a case $\pi(X) \leq
    \pi(\gamma)$ and so we cannot directly apply the induction hypothesis to
    $\texttt{fip}_k\, \gamma\, \beta$.  Note however, that since $\delta \in
    L(\gamma) \cap L(\beta)$ and $\gamma \neq X$, the term $\gamma$ must be of
    form $\gamma = f(\gamma_1,\ldots,\gamma_n)$ as otherwise $\delta \not\in
    L(\gamma)$. Furthermore if $n = 0$, then trivially $\gamma = \beta = f$,
    see~\lref{code:line:5}. Hence, let us assume that $n \geq 1$.
    It follows that $\texttt{fip}_k$ proceeds to construct finite
    intersection partitions for respective pairs of arguments $\gamma_i$ and
    $\beta_i$. However, since $X \to f(\gamma_1,\ldots,\gamma_n)$ is
    conservative (see~\cref{rem:conservative:productions}), it holds
    $\pi(\gamma_i) \leq \pi(X)$. At the same time, $\pi(\beta_i) < \pi(\beta)$;
    hence, by induction we can argue that $\texttt{fip}_k\, \gamma_i\, \beta_i$
    constructs a proper intersection partition $\Pi(\gamma_i,\beta_i)$ for each
    $1 \leq i \leq n$, see~\lref{code:line:7}.  There exists therefore a unique
    term $\pi = f(\pi_1,\ldots,\pi_n) \in \texttt{fip}_k\, X\, \beta$ such that
    $\delta \in L(\pi)$, see~\lref{code:line:7} and~\lref{code:line:17}.

    Conversely, take an arbitrary ground term $\delta =
    f(\delta_1,\ldots,\delta_n) \in L(\pi)$ for some $\pi \in \texttt{fip}_k\,
    X\, \beta$ $(n \geq 0)$. By definition, $\texttt{fip}_k$ proceeds to invoke itself on
    pairs of arguments $\gamma$ and $\beta$ where $\gamma$ is the right-hand
    side of a production $X \to \gamma$ in $\mathscr{G}_k$,
    see~\lref{code:line:16}, and returns the set-theoretic union of recursively
    obtained outcomes. There exists therefore some $\gamma$ such that $\pi \in
    \texttt{fip}_k\, \gamma\, \beta$. If $X \to \gamma$ is regular, then by
    induction, $\texttt{fip}_k\, \gamma\, \beta$ constructs a \fip for both
    $\gamma$ and $\beta$. Consequently, it holds $\delta \in L(\pi) \subset
    L(\gamma) \cap L(\beta) \subset L(X) \cap L(\beta)$.  Assume therefore that
    $X \to \gamma$ is not regular, but instead self-referencing. As before, we
    cannot directly argue about $\texttt{fip}_k\, \gamma\, \beta$ since the
    total potential of $\gamma$ and $\beta$ exceeds the potential of $X$ and
    $\beta$.  However, since $\pi \in \texttt{fip}_k\, \gamma\, \beta$ and
    $\gamma \neq X$, we note that $\gamma$ takes form $\gamma =
    f(\gamma_1,\ldots,\gamma_n)$, see~\lref{code:line:4}. If $f$ is a constant
    symbol, then certainly $\texttt{fip}_k\, \gamma\, \beta$ outputs a proper
    \fip. Otherwise, $\texttt{fip}_k$ proceeds to invoke itself recursively on
    respective pairs of arguments $\gamma_i$ and $\beta_i$.  Since $X \to
    f(\gamma_1,\ldots,\gamma_n)$ is conservative, we know that, by induction,
    $\texttt{fip}_k\, \gamma_i\, \beta_i$ constructs finite intersection
    partitions $\Pi(\gamma_i,\beta_i)$ for all pairs $\gamma_i$ and $\beta_i$.
    Certainly, $\delta_i \in L(\pi_i)$ for some $\pi_i \in
    \Pi(\gamma_i,\beta_i)$; hence $\delta \in L(\pi)$ where $\pi =
    f(\pi_1,\ldots,\pi_n) \in \texttt{fip}_k\, \gamma\, \beta$. It follows that
    $\delta \in L(\pi) \cap L(\beta) \subset L(X) \cap L(\beta)$, which finishes
    the proof.
\end{proof}

\begin{remark}
Note that the termination of $\texttt{fip}_k$ is based on the fact that
    all self-referencing productions of simple $`y$\nobreakdash-reduction
    grammars are at the same time conservative. Indeed, $\texttt{fip}_k$ does
    not terminate in the presence of non-conservative productions. Consider the
    non-conservative production $X \to f(f(X))$. Note that
    \begin{align}
        \begin{split}
            \texttt{fip}_k(f(f(X)), f(X)) &\to
            \texttt{fip}_k(f(X), X)\\ &\to \texttt{fip}_k(X, f(X))\\
            &\to \texttt{fip}_k(f(f(X)), f(X))\\
            &\to \cdots
        \end{split}
    \end{align}
\end{remark}

\begin{remark}\label{rem:T:optimisation}
    It is possible to optimise $\texttt{fip}_k$, as presented
    in~\cref{fig:fip:code}, and (potentially) shrink the
    size of $\Pi(\alpha,\beta)$ by including an additional pattern in form of
    %\begin{minted}[xleftmargin=5pt,escapeinside=@@,mathescape=true]{coq}
        %@$T$@, @$f(\alpha_1,\ldots,\alpha_n)$@ @$\Rightarrow$@ @$\set{f(\alpha_1,\ldots,\alpha_n)}$@
    %\end{minted}
\begin{Verbatim}[commandchars=\\\{\},codes={\catcode`\$=3\catcode`\^=7\catcode`\_=8}]
        \PYG{esc}{$T$}\PYG{o}{,} \PYG{esc}{$f(\alpha_1,\ldots,\alpha_n)$} \PYG{esc}{$\Rightarrow$} \PYG{esc}{$\set{f(\alpha_1,\ldots,\alpha_n)}$}
\end{Verbatim}
\end{remark}

\begin{remark}
Our finite intersection partition algorithm resembles a variant of Robinsons's
    unification algorithm~\cite{Robinson:1965:MLB:321250.321253} applied to
    many-sorted term algebras with a tree hierarchy of sorts, as investigated by
    Walther, cf.~\cite{Walther:1988}. It becomes even more apparent once the
    correspondence between sorts, as stated in the language of many-sorted term
    algebra, and the tree-like hierarchy of non-terminal symbols in
    $`y$\nobreakdash-reduction grammars is established,
    see~\cref{fig:partial:order:diagram}.
\end{remark}

\section{The construction of simple $`y$-reduction grammars}\label{sec:the:construction}
Equipped with constructible, finite intersection partitions, we are now ready to
describe the generation procedure for $\seq{\mathscr{G}_n}_n$. We begin with
establishing a convenient \emph{verbosity} invariant maintained during the
construction of $\seq{\mathscr{G}_n}_n$.

\begin{definition}[Closure width]
    Let $\alpha$ be a term in $\mathscr{T}_{\mathscr{F}}(X)$ for some finite set $X$. Then,
    $\alpha$ has \emph{closure width} $w$, if $w$ is the largest non-negative
    integer such that $\alpha$ is of form $\chi[\sigma_1]\cdots[\sigma_w]$ for
    some term $\chi$ and substitutions $\sigma_1,\ldots,\sigma_w$.
    For convenience, we refer to $\chi$ as the \emph{head} of $\alpha$ and to
    $\sigma_1,\ldots,\sigma_w$ as its \emph{tail}.
\end{definition}

\begin{definition}[Verbose $`y$-reduction grammars]
    A \urg $\mathscr{G}_n$ is said to be \emph{verbose} if none of its
    productions takes the form $X \to G_k [\sigma_1] \cdots [\sigma_w]$ for
    some arbitrary non-negative $w$ and $k$.
\end{definition}

Simple, verbose $`y$\nobreakdash-reduction grammars admit a neat structural
feature. Specifically, their productions preserve closure width of
generated terms.

\begin{lemma}\label{lem:verbose:urg}
    Let $\mathscr{G}_n$ be a simple, verbose \urg. Then, for each production
    $G_n \to \chi [\sigma_1] \cdots [\sigma_w]$ in $\mathscr{G}_n$ such that its
    right-hand side is of closure width $w$, either $\chi = N$ or $\chi$ is in
    form $\chi = f(\alpha_1,\ldots,\alpha_m)$ for some non-closure function symbol $f$ of
    arity $m$.
\end{lemma}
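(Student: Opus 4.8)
The plan is to read the admissible shapes of the head $\chi$ straight off the grammar and eliminate every possibility except $N$. Since the right-hand side $\chi[\sigma_1]\cdots[\sigma_w]$ has closure width exactly $w$, its head $\chi$ has closure width $0$; in particular $\chi$ is not of the form $\psi[\sigma]$. As $\chi \in \mathscr{T}_{\mathscr{F}}(\mathscr{N}_n)$, exactly one of two situations arises: either $\chi = f(\alpha_1,\ldots,\alpha_m)$ for some function symbol $f$, or $\chi$ is one of the non-terminals $T,S,N,G_0,\ldots,G_n$.

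The function-symbol case is immediate. If $f$ were the closure operator $[\cdot]$, then $\chi = \alpha_1[\alpha_2]$ would have closure width at least $1$, contradicting closure width $0$; hence $f$ is a non-closure function symbol and we land in the second alternative of the statement. It therefore remains to treat $\chi$ a non-terminal and show that then necessarily $\chi = N$, i.e.\ to rule out $\chi \in \set{T,S,G_0,\ldots,G_n}$.

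The symbols $G_0,\ldots,G_n$ are excluded at once: a production $G_n \to G_k[\sigma_1]\cdots[\sigma_w]$ is precisely the shape forbidden by the verbosity assumption, so $\chi \neq G_k$ for every $k$. For $\chi = S$ I argue by sorts. Since $\mathscr{G}_n$ is a \urg, $L(G_n)$ consists of \luterms, whence $L(G_n) \subseteq L(T)$; every ground instance $d[s_1]\cdots[s_w]$ of the right-hand side (with $d \in L(\chi)$) thus lies in $L(T)$, and the closure production $T \to T[S]$ of~\cref{eg:luterms:grammar} forces its head $d$ into $L(T)$ as well, so $L(\chi) \subseteq L(T)$. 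But $L(S)$ contains only substitutions $t/,\Uparrow(s),\uparrow$, whose root symbols never head a \luterm, so $L(S) \cap L(T) = \emptyset$; assuming the grammar is reduced (all non-terminals and the $\sigma_i$ productive, so $L(S)\neq\emptyset$) this contradicts $L(S) \subseteq L(T)$, and hence $\chi \neq S$.

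The remaining and most delicate case is $\chi = T$, where I exploit that $L(G_n)$ collects exactly the terms normalising in $n$ normal-order steps while $L(T)$ is far too large. If $w = 0$ the production $G_n \to T$ gives $L(T) \subseteq L(G_n)$, which is absurd since, e.g., $\idx{0}$ normalises in $0$ steps and $\idx{0}[\uparrow]$ in $1$. For $w \geq 1$ I fix representatives $s_i \in L(\sigma_i)$ and compare the two \luterms $\idx{0}[s_1]\cdots[s_w]$ and $(\idx{0}\,\idx{0})[s_1]\cdots[s_w]$, both of which would have to lie in $L(T[\sigma_1]\cdots[\sigma_w]) \subseteq L(G_n)$. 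The key step I would establish is that, along a closure-spine over an application, the leftmost-outermost redex is always the innermost~\eqref{red:app} step, so peeling the top application through all $w$ closures costs exactly $w$ steps and the two resulting copies then reduce independently, yielding
\[
  \#\bigl((ab)[s_1]\cdots[s_w]\bigr) = w + \#\bigl(a[s_1]\cdots[s_w]\bigr) + \#\bigl(b[s_1]\cdots[s_w]\bigr),
\]
where $\#(\gamma)$ denotes the number of normal-order $`y$-steps to normalise $\gamma$. With $a = b = \idx{0}$ and $c := \#(\idx{0}[s_1]\cdots[s_w]) \geq 1$ (the impure spine is reducible), the two candidates normalise in $c$ and $w + 2c$ steps; as $w + 2c \neq c$ they cannot share the single class $L(G_n)$, the desired contradiction. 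Verifying this step-count recursion cleanly — the leftmost-outermost schedule and the independence of the duplicated copies — is the main obstacle; the rest is routine. This exhausts the non-terminal case, forcing $\chi = N$ and completing the proof.
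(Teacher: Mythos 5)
Your proof is correct, and its outer skeleton coincides with the paper's: enumerate the admissible heads $\chi$, kill $G_0,\ldots,G_n$ by verbosity, and reach a step-counting contradiction when $\chi = T$; you additionally dispatch $\chi = S$ by a root-symbol/sort argument, a case the paper passes over in silence. Where you genuinely diverge is the witness for the crucial case $\chi = T$. The paper builds a tower $\delta_1 = \idx{0}[\Uparrow(\uparrow)]$, $\delta_{m+1} = \idx{0}[\delta_m/]$ with $\delta_m \downarrow_m$, and observes that $\delta_{n+1}[s_1]\cdots[s_w] \in L(T[\sigma_1]\cdots[\sigma_w]) \subseteq L(G_n)$ needs at least $n+1$ steps --- one term, one inequality. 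You instead exhibit two terms of that language, $\idx{0}[s_1]\cdots[s_w]$ and $(\idx{0}\,\idx{0})[s_1]\cdots[s_w]$, and separate their normalisation lengths via the recursion $\#\bigl((ab)[s_1]\cdots[s_w]\bigr) = w + \#\bigl(a[s_1]\cdots[s_w]\bigr) + \#\bigl(b[s_1]\cdots[s_w]\bigr)$. That recursion does hold, and your justification is the right one: no rewriting rule matches a closure whose head is itself a closure, so the leftmost-outermost redex is forced down to the innermost \eqref{red:app} instance until the application surfaces after exactly $w$ steps, and the two copies then reduce independently (absent \eqref{red:beta}, the top application never becomes a redex). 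So the step you flag as the main obstacle is real but goes through; it is simply more machinery than the paper's tower, which only needs that the head redex of $\delta_{n+1}$ must fire before anything else. In exchange your argument is independent of $n$ --- it shows $L(T[\sigma_1]\cdots[\sigma_w])$ meets two distinct classes and hence embeds in no $L(G_k)$ whatsoever --- and it treats $w = 0$ explicitly. Both proofs tacitly assume each $L(\sigma_i)$ is non-empty; you at least say so.
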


\begin{proof}
    Suppose that neither $\chi = N$ nor $\chi = f(\alpha_1,\ldots,\alpha_m)$.
    Since $\mathscr{G}_n$ is simple, it follows that either $\chi = T$ or $\chi
    = G_k$ for some $0 \leq k \leq n$. However, due to the verbosity of
    $\mathscr{G}_n$ we know that $\chi \neq G_k$ and
    so it must hold $\chi = T$. Consider the following
    inductive family of terms:
    \begin{equation}
            \delta_1 = \idx{0}[\Uparrow(\uparrow)]
            \qquad \text{whereas} \qquad \delta_{n+1} = \idx{0}[\delta_n/].
    \end{equation}
    By construction, we note that $\delta_n \downarrow_n$.  Let $s_1,\ldots,s_w$ be substitutions
    satisfying $s_i \in L(\sigma_i)$. Note that $\delta :=
    \delta_{n+1}[s_1]\cdots[s_w] \in L(T[\sigma_1]\cdots[\sigma_w])$;
    hence, simultaneously $\delta$ reduces in $n$ steps, as $\delta \in
    L(G_n)$, and in at least $n+1$ steps, contradiction.
\end{proof}

\begin{lemma}\label{lem:verbose:closure:width}
    Let $\mathscr{G}_n$ be a simple, verbose \urg. Then, for each production
    $G_n \to \chi[\sigma_1]\cdots[\sigma_w]$ in $\mathscr{G}_n$ such that its
    right-hand side is of closure width $w$, and ground term $\delta \in
    L(\chi[\sigma_1]\cdots[\sigma_w])$ it holds that $\delta$ is of closure
    width $w$.
\end{lemma}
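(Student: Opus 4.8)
The plan is to reduce everything to the shape of the head $\chi$ supplied by \cref{lem:verbose:urg}, and then to argue that the $w$ top-level closures appearing on the right-hand side are rigid under derivation while the head can never contribute an additional closure.

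First I would apply \cref{lem:verbose:urg}: since $\mathscr{G}_n$ is simple and verbose and the right-hand side $\chi[\sigma_1]\cdots[\sigma_w]$ has closure width $w$, its head $\chi$ is either $N$ or of the form $f(\alpha_1,\ldots,\alpha_m)$ for some non-closure function symbol $f$. The point I want to record is that the outermost symbol of $\chi$ is never the closure operator $[\cdot]$.

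Next I would observe that the closure operator is a binary terminal symbol of $\mathscr{F}$, so the $w$ outermost closures of $\chi[\sigma_1]\cdots[\sigma_w]$ form part of the fixed terminal skeleton and are copied verbatim into every derived ground term; only the non-terminals occurring inside $\chi$ and inside the substitutions $\sigma_i$ are rewritten. Hence any $\delta \in L(\chi[\sigma_1]\cdots[\sigma_w])$ factors as $\delta = \chi'[s_1]\cdots[s_w]$ with $\chi' \in L(\chi)$ and $s_i \in L(\sigma_i)$, which already witnesses closure width at least $w$.

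The remaining step, and the only one that genuinely uses \cref{lem:verbose:urg}, is to show that the closure width is not strictly larger than $w$, i.e.~that $\chi'$ is not itself a closure. If $\chi = N$, then $\chi'$ is a de~Bruijn index $\idx{k}$, whose outermost symbol is $\succ$ or $\idx{0}$; if $\chi = f(\alpha_1,\ldots,\alpha_m)$ with $f$ non-closure, then $f$, being terminal, remains the outermost symbol of $\chi'$. Either way $\chi'$ has closure width $0$, so no $(w+1)$st closure can be peeled off and the closure width of $\delta$ is exactly $w$. I do not anticipate a real obstacle here: the whole argument hinges on the single fact that a non-closure head is preserved under derivation, which is precisely the content delivered by \cref{lem:verbose:urg}.
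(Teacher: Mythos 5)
Your proof is correct and follows the same route as the paper, which simply states that the lemma is a direct consequence of \cref{lem:verbose:urg}; you have merely made explicit the two facts the paper leaves implicit (the $w$ outermost closures are terminal structure preserved under derivation, and a head of the form $N$ or $f(\alpha_1,\ldots,\alpha_m)$ with $f$ non-closure can only derive ground terms of closure width $0$).
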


\begin{proof}
    Direct consequence of~\cref{lem:verbose:urg}.
\end{proof}

The following $\varphi$\nobreakdash-matchings are the central tool used in the
construction of new reduction grammars. Based on finite intersection partitions,
$\varphi$\nobreakdash-matchings provide a simple \emph{template recognition} mechanism
which allows us to match productions in $\mathscr{G}_n$ with right-hand side of
$`y$ rewriting rules.

\begin{definition}[$\varphi$\nobreakdash-matchings]
    Let $\mathscr{G}_n$ be a simple, verbose \urg and $\varphi = \chi [\tau_1] \cdots
    [\tau_d] \in \mathscr{T}(\mathscr{N}_n)$ be a \emph{template} (term) of
    closure width $d$.  Assume that $X \to \gamma[\sigma_1]\cdots[\sigma_w]$ is
    a production of $\mathscr{G}_n$ which right-hand side has closure width $w$.
    Furthermore, let
    \begin{equation}
        \Delta_{\varphi}(\gamma[\sigma_1]\cdots[\sigma_w]) =
            \Pi(\gamma[\sigma_1]\cdots[\sigma_w], \chi [\tau_1] \cdots [\tau_d]
                \underbrace{[S] \cdots [S]}_{w-d \text{ times}})
    \end{equation}
    be the set of \emph{$\varphi$\nobreakdash-matchings of
    $\gamma[\sigma_1]\cdots[\sigma_w]$}.

    Then, the set $\Delta_{\varphi}^n$ of \emph{$\varphi$\nobreakdash-matchings}
    of $\mathscr{G}_n$ is defined as
    \begin{equation}
        \Delta_\varphi^n = \bigcup \set{\Delta_\varphi(\gamma)~|~G_n \to \gamma \in
        \mathscr{G}_n}.
    \end{equation}
    For further convenience, we write
    $\varphi_{\langle i \rangle}$ to denote the template $\varphi = \chi
    [\tau_1] \cdots [\tau_d]$ with $i$ copies of $[S]$ appended to its original
    tail, i.e.~$\varphi_{\langle i \rangle} = \chi [\tau_1] \cdots [\tau_d] \underbrace{[S]
    \cdots [S]}_{i~\text{times}}$.
\end{definition}

\begin{table}[ht]
    \caption{$`y$-rewriting rules with respective
    templates and production schemes.}\label{tab:templates}
\begin{tabular}{c|c|c|c}
    & Rewriting rule & Template $\varphi$ & Production scheme
    $\Delta_\varphi^n \mapsto (G_{n+1} \to \gamma)$\\
  \hline
    (App)& $(a b)[s] \to a [s] (b [s])$ & $T[S] (T[S])$ &
       $\alpha [\tau_1] (\beta [\tau_2]) [\sigma_1]\cdots[\sigma_w] \mapsto
       (\alpha \beta) [\tau] [\sigma_1]\cdots[\sigma_w]^\dagger$\\
  \hline
    (Lambda)& $(`l a)[s] \to `l(a[\Uparrow (s)])$ & $`l (T[\Uparrow (S)])$
    & $`l (\alpha[\Uparrow (\sigma)]) [\sigma_1] \cdots [\sigma_w] \mapsto (`l \alpha) [\sigma] [\sigma_1] \cdots
    [\sigma_w]$ \\
  \hline
    (FVar)& $\idx{0} [a/] \to a$ & $T$ &see~\cref{rem:production:scheme:closure:width}\\
  \hline
    (RVar)& $(\succ \idx{n}) [a/] \to \idx{n}$ & $N$ &
    $\alpha [\sigma_1]\cdots[\sigma_w] \mapsto (\succ \alpha)[T/][\sigma_1]\cdots[\sigma_w] $\\
  \hline
    (FVarLift)& $\idx{0} [\Uparrow(s)] \to \idx{0}$ & $\idx{0}$ &
    $\idx{0} [\sigma_1]\cdots[\sigma_w] \mapsto \idx{0} [\Uparrow(S)][\sigma_1]\cdots[\sigma_w]$\\
  \hline
    (RVarLift)& $(\succ \idx{n}) [\Uparrow(s)] \to \idx{n}[s][\uparrow]$ &
    $N[S][\uparrow]$ & $\alpha [\sigma] [\uparrow] [\sigma_1]\cdots[\sigma_w]
    \mapsto (\succ \alpha) [\Uparrow(\sigma)][\sigma_1]\cdots[\sigma_w]$\\
  \hline
    (VarShift)& $\idx{n}[\uparrow] \to \succ \idx{n}$ & $\succ N$ &
    $(\succ \alpha) [\sigma_1]\cdots[\sigma_w] \mapsto \alpha
    [\uparrow][\sigma_1]\cdots[\sigma_w]$\\
\end{tabular}
    \begin{flushleft}
        {\footnotesize $^\dagger$For each $\tau \in \Pi(\tau_1,\tau_2)$,
        see~\cref{rem:app:construction}.}
    \end{flushleft}
    \begin{flushleft}
        {\footnotesize The final column contains right-hand sides of respective
        productions.}
    \end{flushleft}
\end{table}

In what follows we use computable intersection partitions in our iterative
construction of $\seq{\mathscr{G}_n}_n$. Recall that if $\mathscr{G}_n$ is
simple then, \emph{inter alia}, self-referencing productions starting with the
non-terminal $G_n$ take the form
\begin{equation}\label{eq:Gn:selfref}
G_n \to `l G_n~|~G_0 G_n~|~G_n G_0.
\end{equation}
If $t \downarrow_n$ (for $n \geq 1$) but it does not strat with a head
$`y$\nobreakdash-redex, then it must be of form $t = `l a$ or $t = a b$. In
the former case, it must hold $a \downarrow_n$; hence the pre-defined production
$G_n \to `l G_n$ in $\mathscr{G}_n$. In the latter case, it must hold
$a \downarrow_k$ whereas $b \downarrow_{n-k}$ for some $0 \leq k \leq n$.
And so, it means that we have to include productions in form of $G_n \to G_k G_{n-k}$
for all $0 \leq k \leq n$ in $\mathscr{G}_n$; in particular, the already
mentioned two self-referencing productions, see~\eqref{eq:Gn:selfref}.

Remaining terms have to start with head redexes. Each of these head
$`y$\nobreakdash-redexes is covered by a dedicated set of productions. The
following~\cref{lem:lambda:rule:productions} demonstrates how
$\varphi$\nobreakdash-matchings and, in particular, finite intersection
partitions can be used for that purpose.

\begin{lemma}\label{lem:lambda:rule:productions}
    Let $\varphi = `l (T[\Uparrow(S)])$ be the template corresponding to
    the~\eqref{red:lambda} rewriting rule, see~\cref{tab:templates}, and $t = (`l
    a)[s][s_1]\cdots[s_w]$. Then, $t \downarrow_{n+1}$ if and only if there exists a unique
    term $\pi = \left(`l (\alpha [\Uparrow (\sigma)])\right)[\sigma_1]\cdots[\sigma_w] \in
    \Delta_\varphi^n$ such that $t \in L((`l \alpha) [\sigma]
    [\sigma_1]\cdots[\sigma_w])$.
\end{lemma}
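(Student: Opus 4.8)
The plan is to reduce the statement to the correctness of the (Lambda) production scheme by tracking the unique normal-order step forced on $t$. First I would observe that $t = (`l a)[s][s_1]\cdots[s_w]$ always carries its leftmost-outermost redex at the head: the nested closures $[s_1],\ldots,[s_w]$ each sit over a closure and hence form no redex, so the first redex met while scanning from the root is $(`l a)[s]$, matched by~\eqref{red:lambda}. The first normal-order step is therefore forced and yields
\[
    t \to t' := `l\left(a[\Uparrow(s)]\right)[s_1]\cdots[s_w].
\]
Since normal-order reduction is deterministic, $t \downarrow_{n+1}$ if and only if $t' \downarrow_n$, i.e.~if and only if $t' \in L(G_n)$.

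Next I would match $t'$ against the padded template. Because $t'$ has head $`l(a[\Uparrow(s)])$ and tail $[s_1]\cdots[s_w]$, it lies in $L(\varphi_{\langle w\rangle})$ where $\varphi_{\langle w\rangle} = `l(T[\Uparrow(S)])[S]\cdots[S]$ ($w$ copies), regardless of the $s_i$. Hence $t' \in L(G_n)$ if and only if $t' \in L(G_n) \cap L(\varphi_{\langle w\rangle})$. Now $L(G_n)$ is the disjoint union $\bigcup_{G_n \to \gamma} L(\gamma)$ by unambiguity of $G_n$, so there is at most one production $G_n \to \gamma$ with $t' \in L(\gamma)$; moreover, by~\cref{lem:verbose:closure:width}, any such $\gamma$ must have closure width $w$, which is exactly the width for which $\Delta_\varphi(\gamma) = \Pi(\gamma, \varphi_{\langle w\rangle})$ pads the template. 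Using that each $\Pi(\gamma,\varphi_{\langle w\rangle})$ is a genuine \fip (its members have pairwise disjoint languages and $\bigcup L(\pi) = L(\gamma) \cap L(\varphi_{\langle w\rangle})$, from~\cref{lem:computable:fip}) together with the disjointness of the $L(\gamma)$, I would conclude that $t' \in L(G_n)$ if and only if there is a \emph{unique} $\pi \in \Delta_\varphi^n$ with $t' \in L(\pi)$.

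It then remains to transport membership across the rewrite, which is where the production scheme enters. Since $L(\pi) \subseteq L(\varphi_{\langle w\rangle})$, the matching $\pi$ is necessarily of the shape $\pi = `l(\alpha[\Uparrow(\sigma)])[\sigma_1]\cdots[\sigma_w]$ dictated by the template (this is exactly the form produced by the structural, Case~1 branch of the \texttt{fip} procedure). Unfolding tree-grammar membership componentwise, $t' = `l(a[\Uparrow(s)])[s_1]\cdots[s_w] \in L(\pi)$ holds if and only if $a \in L(\alpha)$, $s \in L(\sigma)$, and $s_i \in L(\sigma_i)$ for every $i$; but these are verbatim the conditions under which $t = (`l a)[s][s_1]\cdots[s_w] \in L\big((`l \alpha)[\sigma][\sigma_1]\cdots[\sigma_w]\big)$, the right-hand side assigned to $\pi$ by the (Lambda) scheme of~\cref{tab:templates}. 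Chaining the equivalences of the three steps yields $t \downarrow_{n+1}$ if and only if there is a unique $\pi \in \Delta_\varphi^n$ with $t \in L\big((`l\alpha)[\sigma][\sigma_1]\cdots[\sigma_w]\big)$.

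The genuinely delicate points, which I would treat with care, are the bookkeeping of uniqueness --- making sure that the unique production $\gamma$ singled out by unambiguity, the unique \fip block $\pi$ singled out by disjointness, and the claimed shape of $\pi$ all line up into a single $\pi$ --- and the componentwise \emph{congruence} step, which uses that membership $f(\delta_1,\ldots,\delta_m)\in L(f(\pi_1,\ldots,\pi_m))$ reduces to $\delta_i \in L(\pi_i)$. The preservation of closure width under verbosity (\cref{lem:verbose:closure:width}) is the ingredient guaranteeing that $t'$ cannot match a production of a different width and thereby spoil the uniqueness.
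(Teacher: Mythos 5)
Your proposal is correct and follows essentially the same route as the paper's proof: the forced head \eqref{red:lambda} step, unambiguity of $G_n$ to isolate a unique production $\gamma$, verbosity (\cref{lem:verbose:closure:width}) to pin its closure width to $w$, the \fip property of $\Pi(\gamma,\varphi_{\langle w\rangle})$ to isolate a unique $\pi$ of the template-dictated shape, and componentwise membership to transport back to $t$. The only difference is presentational --- you organise it as a chain of equivalences where the paper argues the two implications separately.
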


\begin{proof}
Let $t = (`l a)[s][s_1]\cdots[s_w] \downarrow_{n+1}$ where $w \geq 0$. Since $t$
    admits a head $`y$\nobreakdash-redex, we note that $t \to t' = \left(`l (a
    [\Uparrow(s)])\right)[s_1]\cdots[s_w] \downarrow_n$. By assumption, $\mathscr{G}_n$
    is simple, hence there exists a unique production $G_n \to \gamma$ in
    $\mathscr{G}_n$ such that $t' \in L(\gamma)$.
    Consider the set $\Delta_\varphi^n$. Since $G_n \to \gamma$ is the unique
    production satisfying $t' \in L(\gamma)$, it follows that for each production
    $G_n \to \gamma'$ in $\mathscr{G}_n$ such that $\gamma' \neq \gamma$ and all
    $\pi \in \Delta_\varphi(\gamma')$ it holds $t' \not\in L(\pi)$.  Let us
    therefore focus on the set $\Delta_\varphi(\gamma)$ of
    $\varphi$\nobreakdash-matchings limited to $\gamma$.

    By assumption, $\mathscr{G}_n$ is not only simple but also verbose.
    Consequently, we
    know that $\gamma$ retains the closure width of generated terms,
    see~\cref{lem:verbose:closure:width}. It follows that $\gamma$ has closure
    width $w$ and takes the form $\gamma = \chi[\tau_1]\cdots[\tau_w]$.
    Certainly, $t' \in L(\varphi_{\langle w \rangle})$. Moreover,
    $\Delta_\varphi(\gamma) = \Pi(\gamma,\varphi_{\langle w \rangle})$.
    There exists therefore  a
    unique $\pi \in \Pi(\gamma,\varphi_{\langle w \rangle})$ such that
    $t' \in L(\pi)$.  Given the fact that the head of $\varphi_{\langle w
    \rangle}$ is equal to
    $\varphi = `l(T[\Uparrow(S)])$ we note that $\pi$ must be of form $\pi = `l
    (\alpha[\Uparrow(\sigma)]) [\sigma_1]\cdots[\sigma_w]$. However, since $t' =
    (`l a[\Uparrow(s)][s_1]\cdots[s_w]) \in L(\pi)$ it also means that $a \in
    L(\alpha)$, $s \in L(\sigma)$, and $s_i \in L(\sigma_i)$ for all $1 \leq i
    \leq w$. Consequently, $t \in L((`l
    \alpha)[\sigma][\sigma_1]\cdots[\sigma_w])$ as required.

    Conversely, let $\pi = `l(\alpha[\Uparrow(\sigma)]) [\sigma_1] \cdots
    [\sigma_w]$ be the unique term in the $\varphi$\nobreakdash-matching family
    $\Delta_\varphi^n$ such that $t \in L((`l
    \alpha)[\sigma][\sigma_1]\cdots[\sigma_w])$. Note that $a \in L(\alpha)$, $s
    \in L(\sigma)$, and $s_i \in L(\sigma_i)$ for all $1 \leq i \leq w$.  Since
    $t$ has a head $`y$\nobreakdash-redex, after a single reduction step $t$
    reduces to $t' = `l (a[\Uparrow(s)])[s_1]\cdots[s_w] \in L(\pi)$. By
    construction of $\Delta_\varphi^n$, it
    means that there exists a production $G_n \to \gamma$ in $\mathscr{G}_n$ such that $L(\pi)
    \subset L(\gamma)$ and hence $t' \downarrow_n$. Certainly, it follows that $t
    \downarrow_{n+1}$.
\end{proof}

Let us remark that almost all of the rewriting rules of $`l`y$ exhibit a
similar construction scheme; the exceptional~\eqref{red:app}
and~\eqref{red:fvar} rewriting rules are discussed
in~\cref{rem:app:construction} and~\cref{rem:production:scheme:closure:width},
respectively. Given a rewriting rule, we start with the respective template
$\varphi$ (see~\cref{tab:templates}) and generate all possible
$\varphi$\nobreakdash-matchings in $\mathscr{G}_n$.  Intuitively, such an
operation extracts unambiguous sublanguages out of each production in
$\mathscr{G}_n$ which match the right-hand side of the considered rewriting
rule.  Next, we consider each term $\pi \in \Delta_\varphi^n$ and establish new
productions $G_{n+1} \to \gamma$ in $\mathscr{G}_{n+1}$ out of $\pi$. Assuming
that $\mathscr{G}_n$ is a simple and verbose \urg, the novel productions
generated by means of $\Delta_\varphi^n$ cover all the \luterms reducing in
$n+1$ normal-order steps, starting with the prescribed head rewriting rule.  Since
the head of each so constructed production either starts with a function symbol
or is equal to $N$, cf.~\cref{tab:templates}, the outcome grammar is necessarily
verbose. Moreover, if we complete the production generation for all rewriting
rules, by construction, the grammar $\mathscr{G}_{n+1}$ must be, at the same
time, simple. Consequently, the construction of the hierarchy
$\seq{\mathscr{G}_n}_n$ amounts to an inductive application of the above
construction scheme.

\begin{remark}\label{rem:app:construction}
    While following the same pattern for the~\eqref{red:app} rule, we notice that the
    corresponding construction requires a slight modification. Specifically,
    while matching $\varphi = T[S](T[S])$ with a right-hand side $\gamma$ of a production
    $G_n \to \gamma$ in $\mathscr{G}_n$ we cannot conclude that $\pi \in
    \Delta_\varphi^n$ takes the form $\pi = \alpha [\sigma] (\beta [\sigma])
    [\sigma_1] \cdots [\sigma_w]$. Note that, in fact, $\pi = \alpha [\tau_1]
    (\beta [\tau_2]) [\sigma_1] \cdots [\sigma_w]$ however perhaps
    $\tau_1 \neq \tau_2$. Nonetheless, we can still compute $\Pi(\tau_1,\tau_2)$
    and use $\tau \in \Pi(\tau_1,\tau_2)$ to generate a finite set of terms in
    form of $\pi = \alpha [\tau] (\beta [\tau]) [\sigma_1] \cdots [\sigma_w]$.
    Using those terms, we can continue with our construction and establish a set
    of new productions in
    form of $G_n \to (\alpha \beta) [\tau] [\sigma_1] \cdots [\sigma_w]$
    meant to be included in $\mathscr{G}_{n+1}$.
\end{remark}

\begin{remark}\label{rem:production:scheme:closure:width}
Let us also remark that the single rewriting rule which has a template $\varphi$ not
    retaining closure width is~\eqref{red:fvar}. In consequence, the utility of
    $\Delta_T(\gamma)$ is substantially limited.  If $t =
    \idx{0}[a/][s_1]\cdots[s_w] \downarrow_{n+1}$, then $t \to t' = a
    [s_1]\cdots[s_w]$ which, in turn, satisfies $t' \downarrow_n$. Note that if
    $\gamma$ is the right-hand side of a unique production $G_n \to \gamma$ in
    $\mathscr{G}_n$ generating $t'$, then we can match $T$ with \emph{any}
    non-empty prefix of $\gamma$. The length of the chosen prefix influences
    what initial part $\alpha$ of $\gamma$ is going to be placed under the
    closure in $G_{n+1} \to \idx{0}[\alpha/][\sigma_1]\cdots[\sigma_w]$.

    This motivates the following approach. Let $G_n \to \gamma' =
    \chi[\sigma_1]\cdots[\sigma_w]$ be an arbitrary production in
    $\mathscr{G}_n$ of closure width $w$. If $t' \in L(\gamma')$ and $t \to t'$
    in a single head~\eqref{red:fvar}\nobreakdash-reduction, then $t \in
    L(\idx{0}[\chi[\sigma_1]\cdots[\sigma_d]/][\sigma_{d+1}]\cdots[\sigma_w]$
    for some $0 \leq d \leq w$. Therefore, in order to generate all productions
    in $\mathscr{G}_{n+1}$ corresponding to \luterms
    $`y$\nobreakdash-normalising in $n+1$ steps, starting with a
    head~\eqref{red:fvar}\nobreakdash-reduction, we have to include all
    productions in form of $G_{n+1} \to
    \idx{0}[\chi[\sigma_1]\cdots[\sigma_d]/][\sigma_{d+1}]\cdots[\sigma_w]$ for
    each production $G_n \to \gamma'$ in $\mathscr{G}_n$.

    Finally, note that it is, again, possible to optimise
    the~\eqref{red:fvar} construction scheme with respect to the number of
    generated productions. For each $G_n \to \gamma$ in $\mathscr{G}_n$ the
    above scheme produces, \emph{inter alia}, a production $G_{n+1} \to
    \idx{0}[\gamma/]$. Note that we can easily merge them into a single production
    $G_{n+1} \to \idx{0}[G_n/]$ instead.
\end{remark}

Such a construction leads us to the following conclusion.

\begin{theorem}\label{th:main:result}
    For all $n \geq 0$ there exists a constructible, simple \urg $\mathscr{G}_n$.
\end{theorem}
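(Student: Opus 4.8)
The plan is to prove, by induction on $n$, the slightly stronger statement that for every $n \geq 0$ there exists a constructible \emph{simple and verbose} \urg $\mathscr{G}_n$; the theorem is then an immediate consequence. Retaining verbosity as part of the inductive invariant is essential, since the tools of the preceding section---notably \cref{lem:verbose:closure:width} and the very definition of $\varphi$\nobreakdash-matchings---presuppose it. For the base case $n = 0$ I would take the explicitly exhibited grammar carrying the extra productions $G_0 \to N \mid `l G_0 \mid G_0 G_0$ and $N \to \succ N \mid \idx{0}$. A routine structural induction shows that $G_0$ is unambiguous and that $L(G_0)$ is precisely the set of pure \luterms, which coincides with the set of terms $`y$\nobreakdash-normalising in $0$ steps; simplicity and verbosity then hold by inspection, as $\mathscr{G}_0$ involves no closures whatsoever.

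For the inductive step, assume $\mathscr{G}_n$ is a constructible, simple, verbose \urg. I would construct $\mathscr{G}_{n+1}$ by keeping every production of $\mathscr{G}_n$ intact---so that, by truncation, the non-terminals $G_0,\ldots,G_n$ retain their languages and their unambiguity---and then adjoining the new axiom $G_{n+1}$ together with two families of productions. The first family, $G_{n+1} \to `l G_{n+1} \mid G_k G_{n+1-k}$ for $0 \leq k \leq n+1$, covers those $t \downarrow_{n+1}$ possessing no head $`y$\nobreakdash-redex, which (as the \eqref{red:beta} rule is absent from $`y$) must decompose uniquely as $`l a$ with $a \downarrow_{n+1}$ or as an application $a b$ with $a \downarrow_k$ and $b \downarrow_{n+1-k}$. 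The second family is generated rule by rule from the $\varphi$\nobreakdash-matching sets $\Delta_\varphi^n$ according to the production schemes of \cref{tab:templates}, with \eqref{red:app} and \eqref{red:fvar} treated by the dedicated variants of \cref{rem:app:construction} and \cref{rem:production:scheme:closure:width}. Constructibility is then immediate: each $\Delta_\varphi^n$ is a finite union, over the finitely many productions $G_n \to \gamma$, of finite intersection partitions, and these are computable by \cref{lem:computable:fip}.

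It remains to verify the inductive invariant for $\mathscr{G}_{n+1}$. To show that $L(G_{n+1})$ equals the set of terms $\downarrow_{n+1}$ and that $G_{n+1}$ is unambiguous, I would lift \cref{lem:lambda:rule:productions} from \eqref{red:lambda} to every $`y$ rewriting rule: because normal-order reduction is deterministic, a term $t \downarrow_{n+1}$ performs a single, uniquely determined head step $t \to t' \downarrow_n$, and \cref{lem:verbose:closure:width} ensures that the unique production $G_n \to \gamma$ generating $t'$ has the anticipated closure width, so that the corresponding $\varphi$\nobreakdash-matching contributes exactly one covering production. Simplicity and verbosity of $\mathscr{G}_{n+1}$ follow by inspecting the shapes in \cref{tab:templates}: every freshly added regular production has a head that is either a non-closure function symbol or $N$ and references only $G_0,\ldots,G_n$, while the sole self-referencing additions are the permitted forms $`l G_{n+1}$, $G_0 G_{n+1}$, and $G_{n+1} G_0$, cf.~\eqref{eq:Gn:selfref}.

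I expect the principal obstacle to be establishing \emph{completeness and unambiguity simultaneously} for the combined production set. One must verify that the seven templates of \cref{tab:templates} carve the possible head redexes of a term normalising in $n+1$ steps into pairwise disjoint and jointly exhaustive cases, so that distinct rules never generate overlapping sublanguages, while within a single case the disjointness of finite intersection partitions together with the inductive unambiguity of $G_n$ forces a unique derivation; disjointness of $L(G_{n+1})$ from the lower $L(G_k)$ is then handled by the uniqueness-of-step-count observation of \cref{rem:sort:poset}. The genuinely delicate rules are \eqref{red:app} and \eqref{red:fvar}, which break the tidy closure-width-preserving pattern: \eqref{red:app} forces a secondary intersection $\Pi(\tau_1,\tau_2)$ of the two substitutions copied by the rule, and \eqref{red:fvar}, whose template $T$ fails to retain closure width, requires summing over all admissible splittings of the tail as in \cref{rem:production:scheme:closure:width}. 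The remaining rules follow the pattern of \cref{lem:lambda:rule:productions} essentially verbatim.
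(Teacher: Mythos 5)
Your proposal follows the paper's own construction essentially verbatim: an induction maintaining simplicity \emph{and} verbosity, with the pre-defined self-referencing productions covering terms without head redexes and the remaining productions generated per rewriting rule from the $\varphi$\nobreakdash-matchings of \cref{tab:templates}, computable by \cref{lem:computable:fip}, with \eqref{red:app} and \eqref{red:fvar} handled as in \cref{rem:app:construction} and \cref{rem:production:scheme:closure:width}. The points you flag as delicate (lifting \cref{lem:lambda:rule:productions} to the other rules, and the joint exhaustiveness/disjointness of the head-redex cases) are exactly the ones the paper leaves at the same level of informality, so there is nothing to add.
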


\begin{example}\label{ex:g1}
The following example demonstrates the construction of $\mathscr{G}_1$ out of
    $\mathscr{G}_0$.  Note that $\mathscr{G}_1$ includes the following
    productions associated with the axiom $G_1$:
    \begin{align}
    \begin{split}
        G_1 &\to `l G_1~|~G_0 G_1~|~G_1 G_0\\
            &\quad|~\idx{0} [(G_0 G_0) /]~|~\idx{0}[`l G_0/]~|~\idx{0}[N/]\\
            &\quad|~(\succ N) [T/]~|~\idx{0}[\Uparrow(S)]~|~N[\uparrow].
    \end{split}
\end{align}

    The first three productions are included by default. The next three productions
    are derived from the~\eqref{red:fvar} rule applied to all the productions of
    $G_0 \to \gamma$ in $\mathscr{G}_0$. The final three productions are
    obtained by~\eqref{red:rvar},~\eqref{red:fvarlift},
    and~\eqref{red:varshift}, respectively.
\end{example}

\section{Analytic combinatorics and simple $`y$-reduction
grammars}\label{sec:analytic:combinatorics}

Having established an effective hierarchy $\seq{\mathscr{G}_k}_k$ of simple
$`y$\nobreakdash-reduction grammars, we can now proceed with their quantitative
analysis. Given the fact that regular tree grammars represent well-known
algebraic tree-like structures, our analysis is in fact a standard application
of algebraic singularity analysis of respective generating
functions~\cite{FlajoletOdlyzko1990,flajolet09}. The following result provides
the main tool of the current section.

\begin{prop}[Algebraic singularity analysis, see~\cite{flajolet09}, Theorem
    VII.8]\label{prop:singularity-analysis}
    Assume that $f(z) = \left(\sqrt{1 - z/\zeta}\right) g(z) + h(z)$ is an
    algebraic function, analytic at $0$, and has a unique dominant singularity
    $z = \zeta$.  Moreover, assume that $g(z)$ and $h(z)$ are analytic in the
    disk $|z| < \zeta + \varepsilon$ for some $\varepsilon > 0$. Then, the
    coefficients $[z^n]f(z)$ in the series expansion of $f(z)$ around the
    origin, satisfy the following asymptotic estimate
\begin{equation}
    [z^n]f(z) \xrightarrow[n \to \infty]{} \zeta^{-n} \frac{
        n^{-3/2} g(\zeta)}{\Gamma(-\frac{1}{2})}.
\end{equation}
\end{prop}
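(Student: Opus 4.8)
The plan is to recognise this as a routine instance of singularity analysis and to deploy the transfer machinery of Flajolet and Odlyzko~\cite{FlajoletOdlyzko1990,flajolet09}. The decomposition $f(z) = \sqrt{1 - z/\zeta}\,g(z) + h(z)$ already isolates the singular part from an analytic remainder, so the first move is to dispose of $h$. Since $h$ is analytic on the disk $|z| < \zeta + \varepsilon$, Cauchy's coefficient bound gives $[z^n]h(z) = O(\rho^{-n})$ for any radius $\rho$ with $\zeta < \rho < \zeta + \varepsilon$. As $\rho^{-n}$ is exponentially smaller than the target scale $\zeta^{-n} n^{-3/2}$, the remainder $h$ makes no contribution to the stated asymptotic, and it suffices to analyse $\sqrt{1 - z/\zeta}\,g(z)$.

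The core is the fundamental coefficient asymptotic for a square-root singularity. First I would record the basic scale
\[
    [z^n](1-z)^{1/2} \sim \frac{n^{-3/2}}{\Gamma(-\tfrac{1}{2})},
\]
which is the $\alpha = \tfrac{1}{2}$ case of the standard estimate $[z^n](1-z)^{\alpha} \sim n^{-\alpha-1}/\Gamma(-\alpha)$. This estimate is obtained by a Hankel-contour evaluation of Cauchy's integral: one deforms the circle of integration into a keyhole contour that hugs the singularity at $z = 1$ along a narrow slit, rescales $z = 1 + t/n$, and lets the contour converge to Hankel's representation of $1/\Gamma$. A simple rescaling $z \mapsto z/\zeta$ then yields $[z^n](1-z/\zeta)^{1/2} = \zeta^{-n}[z^n](1-z)^{1/2} \sim \zeta^{-n} n^{-3/2}/\Gamma(-\tfrac{1}{2})$.

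It remains to transfer this through the analytic factor $g$. Since $g$ is analytic at $\zeta$, I would write $g(z) = g(\zeta) + (z-\zeta)\tilde{g}(z)$ with $\tilde{g}$ analytic at $\zeta$. The contribution of $g(\zeta)\sqrt{1 - z/\zeta}$ is exactly $g(\zeta)$ times the scale computed above, whereas the correction $(z-\zeta)\tilde{g}(z)\sqrt{1 - z/\zeta} = -\zeta\,\tilde{g}(z)(1 - z/\zeta)^{3/2}$ carries singular exponent $\tfrac{3}{2}$ and hence contributes coefficients of order $\zeta^{-n} n^{-5/2}$, negligible against the leading term. Iterating this expansion (or invoking the transfer theorem directly for the product of a standard function with an analytic factor) confirms $[z^n]f(z) \sim g(\zeta)\,\zeta^{-n} n^{-3/2}/\Gamma(-\tfrac{1}{2})$, which is the claimed estimate.

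The step I expect to carry the real weight is justifying that the Hankel-contour deformation is legitimate, i.e.~that $f$ admits analytic continuation into a $\Delta$-domain — a disk of radius exceeding $\zeta$ with a narrow sector around $\zeta$ removed — on which the requisite integral bounds hold. This is precisely where the hypotheses that $f$ is \emph{algebraic} and has a \emph{unique dominant singularity} at $\zeta$ enter: algebraicity furnishes analytic continuation along paths avoiding the finitely many branch points, and uniqueness of the dominant singularity guarantees that no further singularity lies on or inside the circle $|z| = \zeta$, so the contour can be pushed out to radius $\zeta + \delta$ everywhere except near $\zeta$. Once $\Delta$-analyticity is secured, the contribution along the outer arc is $O((\zeta+\delta)^{-n})$ and the near-singularity arcs reproduce the Hankel integral, completing the transfer.
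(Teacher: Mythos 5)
The paper offers no proof of this proposition: it is imported directly from Flajolet and Sedgewick~\cite{flajolet09} (Theorem VII.8), so the only benchmark is the standard proof in that reference, and your sketch reproduces that argument correctly --- exponentially negligible contribution of $h$ by Cauchy bounds, Hankel-contour asymptotics for the scale $[z^n](1-z/\zeta)^{1/2} \sim \zeta^{-n}n^{-3/2}/\Gamma(-\tfrac{1}{2})$, and Flajolet--Odlyzko transfer through the analytic factor $g$. One cosmetic remark: given the explicit decomposition with $g,h$ analytic in $|z|<\zeta+\varepsilon$, the required $\Delta$-analyticity is automatic (the square root is analytic off the ray $[\zeta,\infty)$ and $g$ is bounded on any $\Delta$-domain inside the larger disk), so the algebraicity hypothesis need not carry the weight you assign it in your final paragraph.
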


In order to analyse the number of \luterms normalising in $k$ steps, we execute
the following plan. First, we use the structure (and unambiguity) of
$\mathscr{G}_k$ to convert it by means of symbolic methods into a corresponding
generating function $G_k(z) = \sum g^{(k)}_n z^n$ in which the integer
coefficient $g^{(k)}_n$ standing by $z^n$ in the series expansion of $G_k(z)$,
also denoted as $[z^n]G_k(z)$, is equal to the number of \luterms of size $n$
normalising in $k$ steps. Next, we show that so obtained generating functions
fit the premises of~\cref{prop:singularity-analysis}.

We start with establishing an appropriate size notion for \luterms.  For
technical convenience, we assume the following
\emph{natural size notion}, equivalent to the number of constructors in the associated
term algebra $\mathscr{T}_\mathscr{F}(\emptyset)$, see~\cref{fig:size:notion}.
\begin{figure}[hbt!]
    \centering
\noindent\begin{minipage}{.3\linewidth}
\begin{align*}
    |\idx{n}| &= n + 1\\
    |`l a| &= 1 + |a|\\
    |a b| &= 1 + |a| + |b|\\
    |a[s]| &= 1 + |a| + |s|
\end{align*}
\end{minipage}%
\begin{minipage}{.3\linewidth}
\begin{align*}
    |a/| &= 1 + |a|\\
    |\Uparrow(s)| &= 1 + |s|\\
    |\uparrow| &= 1.
\end{align*}
\end{minipage}%
    \caption{Natural size notion for \luterms.}
    \label{fig:size:notion}
\end{figure}

The following results exhibit the closed-form of generating functions
corresponding to pure terms as well as the general class of \luterms and
explicit substitutions.

\begin{prop}[see~\cite{doi:10.1093/logcom/exx018}]\label{prop:Linfty}
    Let $L_\infty(z)$ denote the generating function corresponding to
    the set of \lterms in $`y$\nobreakdash-normal form (i.e.~without
    $`y$\nobreakdash-redexes). Then,
    \begin{equation}
    L_\infty(z) = \frac{1-z-\sqrt{\frac{1 - 3z - z^2- z^3}{1-z}}}{2 z}.
    \end{equation}
\end{prop}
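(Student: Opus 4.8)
The plan is to reduce the statement to a textbook application of the symbolic method, the only genuinely combinatorial input being the shape of $`y$\nobreakdash-normal forms. First I would observe that a \luterm is a $`y$\nobreakdash-normal form if and only if it is closure-free, i.e.~a pure \lterm. One direction is immediate: every left-hand side of the seven $`y$ rules in~\cref{fig:lambda:ups:rewriting:system} contains a closure, so a term without closures admits no $`y$\nobreakdash-redex. For the converse, any closure $t[s]$ whose head $t$ is not itself a closure is a redex of exactly one rule --- \eqref{red:app} or \eqref{red:lambda} when $t$ is an application or abstraction, and one of \eqref{red:fvar}, \eqref{red:rvar}, \eqref{red:fvarlift}, \eqref{red:rvarlift}, \eqref{red:varshift} when $t$ is an index (the rule being selected by the shape of $s$ and by whether the index is $\idx{0}$ or a successor). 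Choosing an innermost such closure in any impure term always exposes a redex, so the $`y$\nobreakdash-normal forms are exactly the closure-free terms freely generated by $t ::= \idx{n} \mid `l t \mid t\,t$.

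Next I would translate this unambiguous specification into a functional equation. Writing $N(z)$ for the generating function of de~Bruijn indices, the unary encoding of $\idx{n}$ together with the size convention $|\idx{n}| = n+1$ (see~\cref{fig:size:notion}) yields the geometric series
\begin{equation}
    N(z) = \sum_{n \geq 0} z^{n+1} = \frac{z}{1-z}.
\end{equation}
Since the three constructors are mutually disjoint and unambiguous, admissibility of disjoint union, Cartesian product, and the unit size offsets attached to $`l$ and to application give
\begin{equation}
    L_\infty(z) = \frac{z}{1-z} + z\,L_\infty(z) + z\,L_\infty(z)^2.
\end{equation}

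I would then read this as a quadratic in $L_\infty(z)$, namely $z L_\infty^2 + (z-1) L_\infty + \tfrac{z}{1-z} = 0$, and solve it:
\begin{equation}
    L_\infty(z) = \frac{(1-z) \pm \sqrt{(1-z)^2 - \frac{4z^2}{1-z}}}{2z}.
\end{equation}
The radicand simplifies by the routine manipulation $(1-z)^2 - \frac{4z^2}{1-z} = \frac{(1-z)^3 - 4z^2}{1-z} = \frac{1 - 3z - z^2 - z^3}{1-z}$, matching the statement exactly. The branch is fixed by the initial condition: $L_\infty$ is analytic at the origin with $L_\infty(0) = 0$ (there are no terms of size $0$), while the $+$ branch has a pole at $z=0$; hence the $-$ sign is forced, giving the claimed closed form.

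Everything after the first step is mechanical, so I expect no computational obstacle: the discriminant collapses cleanly and the sign is determined by regularity at $0$. The only conceptual point, and thus the part I would take care to state precisely, is the identification of $`y$\nobreakdash-normal forms with closure-free terms, since this is what licenses the free grammar $t ::= \idx{n} \mid `l t \mid t\,t$ and hence the whole generating-function derivation.
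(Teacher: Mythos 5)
Your derivation is correct: the identification of $\psi$-normal forms with closure-free terms, the functional equation $L_\infty(z) = \tfrac{z}{1-z} + zL_\infty(z) + zL_\infty(z)^2$ under the stated size notion, the simplification of the discriminant, and the branch selection via $L_\infty(0)=0$ all check out. The paper itself offers no proof of this proposition — it imports the closed form by citation — and your argument is precisely the standard symbolic-method derivation given in the cited source, so there is nothing to fault and nothing materially different to compare.
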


\begin{prop}[see~\cite{Bendkowski:2018:CES:3236950.3236951}]\label{prop:basic:gen:fun}
    Let $T(z)$, $S(z)$ and $N(z)$ denote the generating functions corresponding to
    \luterms, substitutions, and de~Bruijn indices, respectively. Then,
    \begin{equation}\label{eq:plain:terms:genfun}
      T(z) = \dfrac{1-\sqrt{1-4z}}{2z} - 1, \quad
      S(z) = \frac{1-\sqrt{1-4z}}{2z} \left(\frac{z}{1-z}\right)
        \quad \text{and} \quad N(z) = \frac{z}{1-z}.
    \end{equation}
\end{prop}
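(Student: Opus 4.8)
The plan is to apply the symbolic method of analytic combinatorics to the unambiguous regular tree grammar $\Lambda$ describing \luterms, see~\cref{eg:luterms:grammar}, translating its productions into a system of algebraic equations for $T(z)$, $S(z)$, and $N(z)$. Since $\Lambda$ is unambiguous, each \luterm admits a unique derivation and is therefore counted exactly once; together with the admissibility of the natural size notion of~\cref{fig:size:notion} (each constructor contributes a factor $z$ and disjoint alternatives translate to sums), this guarantees that the formal translation faithfully reflects the underlying counting sequences.

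First I would handle the de~Bruijn indices. From $N \to \idx{0} \mid \succ N$, using $|\idx{0}| = 1$ and the fact that each successor increases the size by one, the production scheme yields
\[
N(z) = z + z\,N(z),
\]
whence $N(z) = \frac{z}{1-z}$; equivalently, one sums the geometric series $\sum_{n \ge 0} z^{n+1}$ arising directly from $|\idx{n}| = n+1$.

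Next I would treat substitutions and terms together. The production $S \to T/ \mid \Uparrow(S) \mid \uparrow$, with size contributions $|a/| = 1 + |a|$, $|\Uparrow(s)| = 1 + |s|$, and $|\uparrow| = 1$, gives
\[
S(z) = z\,T(z) + z\,S(z) + z,
\]
so that $S(z) = \frac{z\,(T(z)+1)}{1-z}$. Similarly, $T \to N \mid `l T \mid T T \mid T[S]$ translates into
\[
T(z) = N(z) + z\,T(z) + z\,T(z)^2 + z\,T(z)\,S(z).
\]
Substituting the closed form of $N(z)$ together with the expression for $S(z)$, multiplying through by $(1-z)$ and cancelling, the equation collapses to the clean functional equation $T(z) = z\,(T(z)+1)^2$, i.e.~the shifted Catalan equation. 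Solving the resulting quadratic $z\,T^2 + (2z-1)T + z = 0$ and selecting the branch analytic at the origin (the one with $T(0)=0$, which forces the minus sign before the radical) yields $T(z) = \frac{1-\sqrt{1-4z}}{2z} - 1$. Back-substituting into $S(z) = \frac{z(T(z)+1)}{1-z}$ and using $T(z)+1 = \frac{1-\sqrt{1-4z}}{2z}$ gives the stated closed form for $S(z)$.

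The genuinely delicate points are bookkeeping rather than conceptual. One must confirm that $\Lambda$ is unambiguous so that no \luterm is overcounted, verify that the unary encoding of indices contributes exactly one factor $z$ per successor under the natural size notion, and---when inverting the quadratic---pick the correct square-root branch by matching the initial condition $T(0)=0$ (equivalently, by analyticity at $z=0$). The algebraic reduction to $T = z(T+1)^2$ is the main computational step, but it is routine once the substitutions for $N(z)$ and $S(z)$ have been carried out.
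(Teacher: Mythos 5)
Your derivation is correct and is exactly the standard symbolic-method argument that the paper delegates to the cited reference: translating the unambiguous grammar of \cref{eg:luterms:grammar} under the size notion of \cref{fig:size:notion} into the system $N = z + zN$, $S = z(T+1) + zS$, $T = N + zT + zT^2 + zTS$, which indeed collapses to $T = z(T+1)^2$ and yields the stated closed forms after choosing the branch with $T(0)=0$. No gaps.
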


With the above basic generating functions, we can now proceed with the
construction of generating functions corresponding to simple
$`y$\nobreakdash-reduction grammars.

\begin{prop}[Constructible generating functions]\label{prop:gn:form}
    Let $\Phi_k$ denote the set of regular productions in $\mathscr{G}_k$.
    Then, for all $k \geq 1$ there exists a generating function $G_k(z)$
    such that $[z^n]G_k(z)$ (i.e.~the coefficient standing by $z^n$ in the power
    series expansion of $G_k(z)$) is equal to the number of terms of size
    $n$ which $`y$\nobreakdash-normalise in $k$ normal-order reduction steps,
    and moreover $G_k(z)$ admits a closed-form of the following shape:
    \begin{equation}\label{eq:Gn:genfun}
        G_k(z) = \dfrac{1}{1-z-2 z L_\infty(z)} \sum_{G_k \to \gamma \in
        \Phi_k} G_\gamma(z)
    \end{equation}
    where
    \begin{equation}\label{eq:regular:production:genfun}
        G_\gamma(z) = z^{\zeta(\gamma)} {T(z)}^{\tau(\gamma)}
        {S(z)}^{\sigma(\gamma)} {N(z)}^{\nu(\gamma)} \prod_{0 \leq i < k}
        {G_i(z)}^{\rho_i(\gamma)}
    \end{equation}
    and all $\zeta(\gamma)$, $\tau(\gamma)$, $\sigma(\gamma)$, $\nu(\gamma)$,
    and $\rho_i(\gamma)$ are non-negative integers depending on $\gamma$.
\end{prop}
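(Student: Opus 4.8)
The plan is to read off $G_k(z)$ from the unambiguous grammar $\mathscr{G}_k$ by the symbolic method of analytic combinatorics, exploiting the rigid self-referencing structure of a simple \urg. First I would record that, because $\mathscr{G}_k$ is a \urg, the non-terminal $G_k$ is unambiguous, so each ground term has at most one derivation from $G_k$; this is precisely what licenses reading a sum-of-products generating function directly off the productions, with $z$ marking the natural size of \cref{fig:size:notion}. Under that size notion every terminal constructor --- abstraction, application, closure, slash, lift, shift, successor, and $\idx{0}$ --- carries size one and hence contributes a single factor $z$, so the generating function of the language $L(\gamma)$ of any $\gamma \in \mathscr{T}_{\mathscr{F}}(\mathscr{N}_k)$ factors as a monomial in the generating functions of the non-terminals occurring in $\gamma$.

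Next I would split the productions with head $G_k$ into self-referencing and regular ones. By the definition of a simple \urg the only self-referencing productions of $G_k$ are $G_k \to `l G_k \mid G_0 G_k \mid G_k G_0$, while every regular production $G_k \to \gamma \in \Phi_k$ satisfies $\gamma \in \mathscr{T}_{\mathscr{F}}(\mathscr{N} \cup \set{G_0,\ldots,G_{k-1}})$. Translating the three self-referencing productions, with each abstraction and each application contributing a factor $z$, gives the respective contributions $zG_k(z)$, $zG_0(z)G_k(z)$, and $zG_k(z)G_0(z)$; summing over all productions yields the fixed-point equation
\begin{equation*}
G_k(z) = \bigl(z + 2zG_0(z)\bigr)\,G_k(z) + \sum_{G_k \to \gamma \in \Phi_k} G_\gamma(z),
\end{equation*}
which I would solve for $G_k(z)$ to produce the rational prefactor $\frac{1}{1 - z - 2zG_0(z)}$.

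To match the stated denominator it then remains to identify $G_0(z) = L_\infty(z)$. This is the conceptual crux: $G_0$ generates exactly the pure \luterms, namely the de~Bruijn $`l$-terms, which contain no closures and are therefore automatically $`y$-normal, and conversely every pure $`y$-normal term is such a $`l$-term; since the size notion on $\mathscr{G}_0$ coincides with the one underlying \cref{prop:Linfty}, the two classes share the same counting series. Finally, for the monomial shape of $G_\gamma$ I would observe that a regular $\gamma$ does not mention $G_k$, and by simplicity mentions no $G_i$ with $i \geq k$; collecting the $z$-factors from the terminal constructors and the series of each non-terminal leaf then gives $z^{\zeta(\gamma)}T(z)^{\tau(\gamma)}S(z)^{\sigma(\gamma)}N(z)^{\nu(\gamma)}\prod_{0 \leq i < k}G_i(z)^{\rho_i(\gamma)}$, where $\zeta(\gamma)$ is the number of terminal constructors in $\gamma$ and the other exponents count the occurrences of $T$, $S$, $N$, and each $G_i$.

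The hard part will be the justification of the transfer itself rather than any computation: establishing that the \urg unambiguity of $\mathscr{G}_k$ guarantees a clean sum-of-products translation with no inclusion--exclusion corrections, and pinning down the identification $G_0(z) = L_\infty(z)$ that collapses the whole self-referential part of $G_k$ into the rational prefactor. Once these are in place, verifying that each constructor weighs a single $z$ and tallying the exponents $\zeta,\tau,\sigma,\nu,\rho_i$ is purely routine bookkeeping.
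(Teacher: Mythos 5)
Your proposal is correct and follows essentially the same route as the paper's proof: translate the unambiguous grammar via the symbolic method, split the $G_k$-productions into the three self-referencing ones and the regular ones, obtain the linear fixed-point equation $G_k(z) = (z + 2zG_0(z))G_k(z) + \sum_{\gamma} G_\gamma(z)$, solve it, and substitute $G_0(z) = L_\infty(z)$. The only difference is that you spell out the justification for $G_0(z) = L_\infty(z)$ (pure terms coincide with $\psi$-normal forms), which the paper simply asserts.
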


\begin{proof}
    Let $G_k \to \gamma$ be a regular production in $\mathscr{G}_k$. Since by
    construction $\mathscr{G}_k$ is simple, we know that $\gamma \in
    \mathscr{T}_{\mathscr{F}}(\mathscr{N} \cup \set{G_0,\ldots,G_{k-1}})$.
    Following symbolic methods~\cite[Part A, Symbolic
    Methods]{flajolet09} we can therefore convert each non-terminal $X \in
    \mathscr{N} \cup \set{G_0,\ldots,G_{k-1}}$ occurring in $\gamma$ into an
    appropriate generating function $X(z)$. Likewise, we can convert each
    function symbol occurrence $f$ into an appropriate monomial $z$,
    see~\cref{fig:size:notion}. Finally, we group respective monomials together,
    and note that the generating function $G_\gamma(z)$ corresponding to $\gamma$
    takes the form~\eqref{eq:regular:production:genfun}. Respective
    exponents denote the number of occurrences of their associated symbols.

    Consider the remaining self-referencing productions $G_k \to \delta$.
    Again, since $\mathscr{G}_k$ is simple, we know that $\delta$ takes the form
    $`l G_k$, $G_0 G_k$ or (symmetrically) $G_k G_0$. And so, as each $X \in
    \mathscr{N}_n$ is unambiguous in $\mathscr{G}_k$, by symbolic methods, it
    follows that $G_k(z)$ satisfies the following functional equation:
    \begin{equation}\label{eq:Gn:genfun:ii}
    G_k(z) = z G_k(z) + 2 z G_0(z) G_k(z) + \sum_{G_k \to \gamma \in
        \Phi_k} G_\gamma(z).
    \end{equation}
    Note that as no $G_\gamma(z)$ references the left-hand side $G_k(z)$,
    equation~\eqref{eq:Gn:genfun:ii} is in fact linear in $G_k(z)$. Furthermore,
    as  $G_0(z) = L_\infty(z)$ we finally obtain the requested form of $G_k(z)$,
    see~\eqref{eq:Gn:genfun}.
\end{proof}

This brings us to our following, main quantitative result.

\begin{theorem}\label{th:main:result:asymptotics}
    For all $k \geq 1$ the coefficients $[z^n]G_k(z)$ admit the following
    asymptotic estimate
    \begin{equation}
        [z^n]G_k(z) \xrightarrow[n \to \infty]{} c_k \cdot 4^n n^{-3/2}.
    \end{equation}
\end{theorem}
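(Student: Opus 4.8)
The plan is to put $G_k(z)$ into the exact shape demanded by \cref{prop:singularity-analysis}, with dominant singularity $\zeta = \tfrac14$, and then read off the asymptotics. The key claim, which I would prove by induction on $k$, is that $G_k$ admits a \emph{singular decomposition}
\[
G_k(z) = h_k(z) + g_k(z)\sqrt{1-4z},
\]
where $g_k$ and $h_k$ are analytic at the branch point $z = \tfrac14$ (and, up to a removable issue at the origin discussed below, in a disk $|z| < \tfrac14 + \varepsilon$). Matching notation, $\zeta = \tfrac14$, $g = g_k$ and $h = h_k$, so \cref{prop:singularity-analysis} gives $[z^n]G_k(z) \to 4^{n} n^{-3/2}\, g_k(\tfrac14)/\Gamma(-\tfrac12)$, which is the assertion with $c_k = g_k(\tfrac14)/\Gamma(-\tfrac12)$.

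First I would locate the dominant singularity. By \cref{prop:Linfty}, $L_\infty$ is singular only at the least positive root $\rho$ of $p(z) = 1 - 3z - z^2 - z^3$; since $p$ is strictly decreasing on $(0,\infty)$ and $p(\tfrac14) = \tfrac{11}{64} > 0$, we have $\rho > \tfrac14$, so $L_\infty$ and $N(z) = z/(1-z)$ are analytic throughout $|z| \le \tfrac14$. By \cref{prop:basic:gen:fun}, the functions $T$ and $S$ are analytic on $|z| \le \rho$ except for the square-root branch point contributed by $\sqrt{1-4z}$ at $z = \tfrac14 < \rho$. Finally I would control the prefactor $\bigl(1 - z - 2zL_\infty(z)\bigr)^{-1}$ of \eqref{eq:Gn:genfun}: since $L_\infty$ has nonnegative coefficients, for $|z| \le \tfrac14$ one has $|z + 2zL_\infty(z)| \le \tfrac14 + \tfrac12 L_\infty(\tfrac14) < 1$, so the denominator is nonvanishing on $|z| \le \tfrac14$ and the prefactor is analytic in a slightly larger disk; in particular it introduces no competing pole. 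Hence $z = \tfrac14$ is the unique dominant singularity of $G_k$ for every $k \ge 1$.

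Next I would carry out the induction for the singular decomposition. From \cref{prop:basic:gen:fun} the basic functions $T = \tfrac{1-2z}{2z} - \tfrac{1}{2z}\sqrt{1-4z}$, $S = \tfrac{1}{2(1-z)} - \tfrac{1}{2(1-z)}\sqrt{1-4z}$, $N = z/(1-z)$ and $G_0 = L_\infty$ are all of the form $A + B\sqrt{1-4z}$ with $A, B$ analytic at $\tfrac14$. Because $(\sqrt{1-4z})^2 = 1 - 4z$ is analytic, this family is closed under finite sums and products; consequently, using \eqref{eq:regular:production:genfun}, each $G_\gamma$ --- a monomial in $T, S, N$ and the lower generating functions $G_i$ ($i < k$), which belong to the family by the inductive hypothesis --- lies in the family, and so does the finite sum $\sum_{\Phi_k} G_\gamma$ and its product with the analytic prefactor. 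This yields the decomposition $G_k = h_k + g_k\sqrt{1-4z}$ claimed above, and \cref{prop:singularity-analysis} applies.

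The hard part will be to ensure that $c_k \neq 0$, that is $g_k(\tfrac14) \neq 0$, so that the order of growth is genuinely $4^n n^{-3/2}$ and not smaller. I would settle this by a sign analysis of the singular coefficient. At $z = \tfrac14$ the values $T(\tfrac14) = 1$, $S(\tfrac14) = \tfrac23$, $N(\tfrac14) = \tfrac13$, the $G_i(\tfrac14)$ and the prefactor are all finite and strictly positive, while the square-root coefficients $g_T(\tfrac14) = -2$ and $g_S(\tfrac14) = -\tfrac23$ are strictly negative. Since the square-root coefficient of a product $FG$ equals $A_F B_G + A_G B_F$, propagating these signs through the products and sums defining $G_k$ shows $g_k(\tfrac14) \le 0$, with strict inequality because for every $k \ge 1$ some regular production carries a bare $T$ or $S$ factor (for instance the production generated by \eqref{red:rvar}, which appends $[T/]$, or by \eqref{red:fvarlift}, which appends $[\Uparrow(S)]$) and hence contributes a strictly negative summand. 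Thus $g_k(\tfrac14) < 0$ and $c_k = g_k(\tfrac14)/\Gamma(-\tfrac12) > 0$, consistent with $[z^n]G_k$ counting a nonempty, growing family. The only remaining subtlety is that the individual pieces $g_k, h_k$ may have poles at the origin (already $g_T = -1/(2z)$ does); these cancel in $G_k$, which is analytic at $0$ as a counting generating function, so they do not interfere with the singularity-analysis transfer at $z = \tfrac14$.
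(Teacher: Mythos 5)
Your proof is correct and follows essentially the same route as the paper: decompose $G_k(z)$ as $h_k(z) + g_k(z)\sqrt{1-4z}$ with $h_k, g_k$ analytic past $|z| = \tfrac14$, verify that $z = \tfrac14$ is the unique dominant singularity (with $L_\infty$ and $N$ analytic there since $\rho > \tfrac14$), and apply \cref{prop:singularity-analysis}. Your two additions --- the explicit bound showing $1 - z - 2zL_\infty(z)$ is nonvanishing on $|z|\le\tfrac14$, and the sign analysis forcing $g_k(\tfrac14) < 0$ (where the paper only exhibits a single summand $z^{\zeta(\gamma)}T(z)N(z)$ with a genuine square-root part, via the production $G_k \to \idx{0}[\cdots[(\succ N)[T/]/]\cdots/]$ obtained by iterating the \eqref{red:fvar} scheme) --- are welcome refinements of details the paper leaves implicit rather than a different argument.
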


\begin{proof}
    We claim that for each $k \geq 1$ the generating function $G_k(z)$ can be
    represented as $G_k(z) = \sqrt{1-4z} P(z) + Q(z)$ where both $P(z)$ and
    $Q(z)$ are functions analytic in the disk $|z| < \frac{1}{4} + \varepsilon$
    for some positive $\varepsilon$. The asserted asymptotic estimate follows
    then as a straightforward application of algebraic singularity analysis,
    see~\cref{prop:singularity-analysis}.

    We start with showing that each $G_k(z)$ includes a summand in form of
    $\sqrt{1-4z}\, \overline{P}(z) + \overline{Q}(z)$ such that both
    $\overline{P}(z)$ and $\overline{Q}(z)$ are analytic in a large enough disk
    containing (properly) $|z| < \frac{1}{4}$. Afterwards, we argue that no summand
    has singularities in $|z| < \frac{1}{4}$. Standard closure properties of analytic
    functions with single dominant, square-root type singularities guarantee the
    required representation of $G_k(z)$.

    Let $\varphi = N$ be the template corresponding to the~\eqref{red:rvar} rule,
    see~\cref{tab:templates}. Note that since $\mathscr{G}_0$ includes the
    production $G_0 \to N$, the set of $\varphi$\nobreakdash-matchings
    $\Delta_\varphi^0$ consists of the single term $N$. Hence, due to the
    respective production construction, it means that $G_1 \to (\succ N)[T/]$ is
    a production of $\mathscr{G}_1$, cf.~\cref{ex:g1}.  Moreover, as a consequence of the
    construction associated with the~\eqref{red:fvar} rule, for each $G_k \to
    \gamma$ in $\mathscr{G}_k$ there exists a production $G_{k+1} \to
    \idx{0}[\gamma/]$ in the subsequent grammar $\mathscr{G}_{k+1}$. And so, each
    $\mathscr{G}_{k}$ includes among its productions one production in form of
    \begin{equation}
        G_{k} \to \underbrace{\idx{0}[\idx{0}[\ldots\idx{0}}_{k-1 \text{
            times}}[(\succ N)[T/]/]\ldots/]/]
    \end{equation}
    Denote the right-hand side of the above production as $\gamma$.  Note that
    the associated generating function $G_\gamma(z)$,
    cf.~\eqref{eq:Gn:genfun:ii}, must therefore take form
    \begin{equation}\label{eq:major:production}
        G_\gamma(z) = z^{\zeta(\gamma)} T(z) N(z) \qquad \text{and so} \qquad
        G_\gamma(z) = \sqrt{1-4z}\, \overline{P}(z) + \overline{Q}(z)
    \end{equation}
    where both $P(z)$ and $Q(z)$ are analytic in $|z| < \frac{1}{4} +
    \varepsilon$ for some (determined) $\varepsilon > 0$.

    In order to show that no production admits a corresponding generating
    function with singularities in the disk $|z| < \frac{1}{4}$ we note that the
    single dominant singularity of $G_0(z)$, and so at the same time
    $L_\infty(z)$, is equal to the smallest positive real root $\rho$ of
    $1-3z-z^2-z^3$ which satisfies $\frac{1}{4} < \rho \approx 0.295598$,
    see~\cref{prop:Linfty}. Due to the form of basic generating functions
    corresponding to $T$, $S$ and $N$, see~\cref{prop:basic:gen:fun}, we further
    note that other singularities must lie on the unit circle $|z| = 1$.  And
    so, each $G_k(z)$ admits the asserted form
    \begin{equation}
    G_k(z) = \sqrt{1-4z} P(z) + Q(z)
    \end{equation}
    for some functions analytic in a disk $|z| < \frac{1}{4} + \varepsilon$.
\end{proof}

\begin{remark}
    Consider the following \emph{asymptotic density} of \luterms
    $`y$\nobreakdash-normalisable in $k$ normal-order reduction steps
    in the set of all \luterms:
    \begin{equation}\label{eq:asymp:density}
        \mu_k = \lim_{n\to\infty} \dfrac{[z^n]G_k(z)}{[z^n]T(z)}.
    \end{equation}
    In other words, the limit $\mu_k$ of the probability that a uniformly random
    \luterm of size $n$ normalises in $k$ steps as $n$ tends to infinity.  Note
    that for each $k \geq 1$, the asymptotic density $\mu_k$ is positive as both
    $[z^n]G_k(z)$ and $[z^n]T(z)$ admit the same (up to a multiplicative
    constant) asymptotic estimate.  Moreover, it holds $\mu_k \xrightarrow[k \to
    \infty]{} 0$ as the sum $\sum_k \mu_k$ is increasing and necessarily bounded
    above by one.

    Each \luterm is $`y$\nobreakdash-normalising in some (determined) number of
    normal-order reduction steps. However, it is not clear whether $\sum_k \mu_k
    = 1$ as asymptotic density is, in general, not countably additive.  Let us
    remark that if this sum converges to one, then the random variable $X_n$
    denoting the number of normal-order $`y$\nobreakdash-reduction steps
    required to normalise a random \luterm of size $n$ (i.e.~the average-case
    cost of resolving pending substitutions in a random term of size $n$)
    converges pointwise to a discrete random variable $X$ defined as
    $\mathbb{P}(X = k) = \mu_k$. Unfortunately, the presented analysis does not
    determine an answer to this problem.
\end{remark}

\begin{remark}
    \cref{th:main:result} and~\cref{th:main:result:asymptotics} are
    effective in the sense that both the symbolic representation and the
    symbolic asymptotic estimate of respective coefficients are computable.
    Since $\Gamma(-1/2) = -2 \sqrt{\pi}$ is the sole transcendental number
    occurring in the asymptotic estimates, and cancels out when asymptotic
    densities are considered, cf.~\eqref{eq:asymp:density}, we immediately note that for each $k \geq 0$, the
    asymptotic density of terms $`y$\nobreakdash-normalising in $k$ steps is
    necessarily an algebraic number.
\end{remark}

    \cref{fig:densities:lucalculus} provides the specific asymptotic densities
    $\mu_0,\ldots,\mu_{10}$ obtained by means of a direct construction and
    symbolic computations\footnote{Corresponding software is
    available at~\url{https://github.com/maciej-bendkowski/towards-acasrlc}.}
    (numerical values are rounded up to the fifth decimal point).
\begin{figure}[th!]
\centering
\begin{subfigure}{.55\textwidth}
\centering
      \resizebox {260px} {!} {
    \begin{tikzpicture}
\begin{axis}[
    xlabel={Number of $`y$-reduction steps},
    ylabel={Asymptotic density},
    xmin=0, xmax=10,
    ymin=0, ymax=0.025,
    legend pos=north east,
    ymajorgrids=true,
    grid style=dashed,
]

\addplot[
    color=blue,
    mark=*,
    ]
    coordinates {
        (0,0)
        (1,0.02176)
        (2,0.02054)
        (3,0.01200)
        (4,0.01306)
        (5,0.00920)
        (6,0.00915)
        (7,0.00700)
        (8,0.00710)
        (9,0.00600)
        (10,0.00585)
    };
    \legend{$\mu_k$}

\end{axis}
\end{tikzpicture}}
\end{subfigure}
\begin{subfigure}{.3\textwidth}
\centering
\begin{displaymath}
        \begin{array}{l|r}
        k & \mu_k  \\\hline
        0 & 0. \\\hline
        1 & 0.02176  \\\hline
        2 & 0.02054 \\\hline
        3 & 0.01200 \\\hline
        4 & 0.01306 \\\hline
        5 & 0.00920 \\\hline
        6 & 0.00915 \\\hline
        7 & 0.00700 \\\hline
        8 & 0.00710 \\\hline
        9 & 0.00600 \\\hline
        10 & 0.00585 \\\hline
        \end{array}
\end{displaymath}
\end{subfigure}
\caption{Asymptotic densities of terms $`y$\nobreakdash-normalising in $k$
    normal-order reduction steps. In particular, we have
    $\mu_0 + \cdots + \mu_{10} \approx 0.11162$.}\label{fig:densities:lucalculus}
\end{figure}
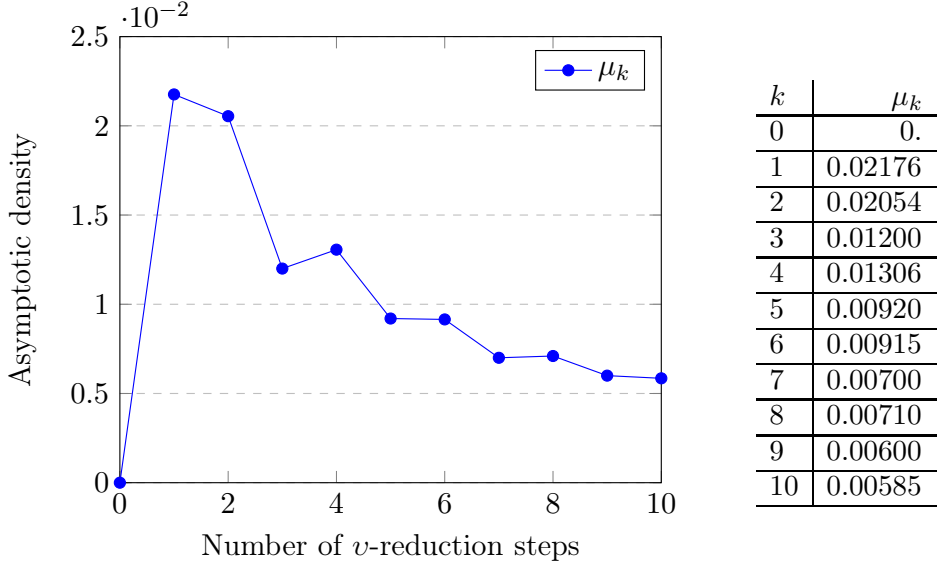

\section{Applications to other term rewriting
systems}\label{sec:other:applications}
Let us note that the presented construction of reduction grammars does not
depend on specific traits immanent to $`l`y$, but rather on certain more general
features of its rewriting rules. The key ingredient in our approach is the
ability to compute finite intersection partitions for arbitrary terms within the
scope of established reduction grammars, which themselves admit certain neat
structural properties.  Using finite intersection partitions, it becomes
possible to generate new productions based on the structural resemblance of both
the left-hand and right-hand sides of associated rewriting rules.

In what follows we sketch the application of the presented approach to other
term rewriting systems, focusing on two examples of \lucalculus and combinatory
logic. Although the technique is similar, some careful amount of details is
required.

\subsection{$`l`y$-calculus}
In order to characterise the full $`l`y$\nobreakdash-calculus, we need a few
adjustments to the already established construction of $\seq{\mathscr{G}_k}_k$
corresponding to the $`y$ fragment.
Clearly, we have to establish a new
production construction scheme associated with the~\eqref{red:beta} rewriting
rule. Consider the corresponding template $\varphi = T[T/]$. Like the
respective template for~\eqref{red:fvar},
cf.~\cref{rem:production:scheme:closure:width}, the current template $\varphi$
does not retain closure width of generated ground terms. However, at the same
time it is also amenable to similar treatment as~\eqref{red:fvar}.

Let $t = (`l a)b[s_1]\cdots[s_w] \downarrow_{n+1}$. Note that $t \to
a[b/][s_1]\cdots[s_w] \downarrow_n$. Consequently, one should attempt to match
$\varphi$ with all possible \emph{prefixes} of $\gamma$ in all productions $G_n
\to \gamma$ instead of merely their heads, as in the case of $\Delta_\varphi^n$.
Let $\gamma = \chi[\sigma_1]\cdots[\sigma_d]$ be of closure width $d$ and
$\delta = \chi[\sigma_1]\cdots[\sigma_i]$ be its prefix ($1 \leq i \leq d$).
Note that, effectively, $\Pi(\delta,T[T/])$ checks if $[\sigma_i]$ takes form
$\beta/$ for some term $\beta$. Hence, for each $\pi \in \Pi(\delta,T[T/])$ we
have $\pi = \alpha[\tau_1]\cdots[\tau_{i-1}][\beta/]$ for some terms
$\alpha$, $\tau_1,\ldots,\tau_{i-1}$, and $\beta$. Out of such a partition $\pi$
we can then construct the production $G_{n+1} \to (`l
\alpha[\tau_1]\cdots[\tau_{i-1}])\beta[\tau_{i+1}]\cdots[\tau_d]$ in
$\mathscr{G}_{n+1}$.

However, with the new scheme for~\eqref{red:beta} we are not yet done with our
construction.  Note that due to the new head redex type, we are, \emph{inter alia},
forced to change the pre-defined set of productions corresponding to \luterms
without a head redex.  Indeed, note that if $t$ does not start with a head redex, then
it must be of form $(`l a)$ or $(a b)$ where in the latter case $a$ cannot start
with a leading abstraction. This constraint suggests the following approach. We
split the set of productions in form of $G_n \to \gamma$ into two categories,
i.e.~productions whose right-hand sides are of form $`l \gamma'$ and the
reminder ones. Since both sets are disjoint, we can group them separately and
introduce two auxiliary non-terminal symbols $G_n^{(`l)}$ and $G_n^{(\neg `l)}$
for terms starting with a head abstraction and for those without a head
abstraction, respectively, with the additional productions $G_n \to
G_n^{(`l)}~|~G_n^{(\neg `l)}$. In doing so, it is possible to pre-define the all
the productions corresponding to terms without head redexes using productions in
form of
\begin{equation}
    G_n \to `l G_n~|~G_k^{(\neg `l)}~G_{n-k} \qquad \text{where} \qquad 0 \leq k \leq
    n.
\end{equation}
This operation, however, requires a minor adjustment of the term potential,
see~\cref{def:term:potential}, accounting for the new non-terminal symbols
$G_k^{(`l)}$ and $G_k^{(\neg `l)}$ occurring in the right-hand sides of grammar
productions. Once updated, we can reuse it in showing that finite intersection
partitions are, again, computable and can serve as a tool in the construction of
new productions in $\mathscr{G}_{n+1}$ out of $\mathscr{G}_n$.

\subsection{$S K$-combinators}
Using a similar approach, it becomes possible to construct reduction grammars
for $SK$\nobreakdash-combinators. In particular, our current technique
(partially) subsumes, and also simplifies, the results
of~\cite{bengryzai2017,BENDKOWSKI_2017}. With merely two rewriting rules in form
of
\begin{align}
\begin{split}
    K x y &\to x\\
    S x y z &\to x z (y z)
\end{split}
\end{align}
we can use the developed finite intersection partitions and
$\varphi$\nobreakdash-matchings to construct a hierarchy $\seq{\mathscr{G}_n}_n$
of normal-order reduction grammars for $S K$\nobreakdash-combinators. The
rewriting rule corresponding to $K$ is similar to~\eqref{red:fvar} whereas the
respective rule for $S$ resembles the~\eqref{red:app} rule; as in this case, we have
to deal with variable duplication on the right-hand side of the rewriting rule.
Instead of closure width, we use a different normal form of terms, and so also
productions, based on the sole binary constructor of term
application. Consequently, a combinator is of \emph{application width} $w$ if it
takes the form $X \alpha_1 \ldots \alpha_w$ for some primitive combinator $X \in
\set{S,K}$.

Consider the more involved case of productions corresponding to head
$S$\nobreakdash-redexes. Let $t = S x y z \alpha_1 \ldots \alpha_w$ be a term
of application width $w + 3$ where $w \geq 0$. Note that
\begin{equation}\label{eq:S:combinator}
    S x y z \delta_1 \ldots \delta_w \to x z (y z) \delta_1 \ldots \delta_w.
\end{equation}
Let us rewrite the right-hand side of~\eqref{eq:S:combinator} as
$t' = X x_1 \ldots x_k z (y z) \delta_1 \ldots \delta_w$ where $x = X x_1 \ldots
x_k$ and $X$ is a primitive combinator. Assume that $\gamma$ is the right-hand
side of the unique production $G_n \to \gamma$ in $\mathscr{G}_n$ such that $t'
\in L(\gamma)$. Note that the shape of $t'$ suggests a
construction scheme similar to the already discussed~\eqref{red:app},
see~\cref{rem:app:construction}, where we first have to match the pattern
$\varphi = T (T T)$ with some argument of $\gamma$ and subsequently attempt to
extract a finite intersection partition $\Pi(\alpha,\beta)$ of respective subterms
$\alpha$ and $\beta$ so that for each $\pi \in \Pi(\alpha,\beta)$ we have $z \in
L(\pi)$. With appropriate terms at hand, we can then construct corresponding
productions in the next grammar $\mathscr{G}_{n+1}$.

\section{Conclusions}\label{sec:conclusion}
Quantitative aspects of term rewriting systems are not well studied. A general
complexity analysis was undertaken by Choppy, Kaplan, and Soria~who considered a
class of confluent, terminating term rewriting systems in which the evaluation
cost, measured in the number of rewriting steps required to reach the normal
form, is independent of the assumed evaluation
strategy~\cite{DBLP:journals/tcs/ChoppyKS89}. More recently, typical evaluation
cost of normal-order reduction in combinatory logic was studied by Bendkowski,
Grygiel and Zaionc~\cite{bengryzai2017,BENDKOWSKI_2017}. Using quite different,
non-analytic methods, Sin’Ya, Asada, Kobayashi and Tsukada considered certain
asymptotic aspects of $`b$\nobreakdash-reduction in the simply-typed variant of
\lcalculus showing that, typically, \lterms have exponentially long
$`b$\nobreakdash-reduction sequences~\cite{SinYa:2017:AST:3080372.3080377}.

Arguably, the main novelty in the presented approach lies in the algorithmic
construction of reduction grammars $\seq{\mathscr{G}_k}_k$ based on finite
intersection partitions, assembled using a general technique reminiscent of
Robinson's unification algorithm applied to many-sorted term algebras,
cf.~\cite{Robinson:1965:MLB:321250.321253,Walther:1988}. Equipped with finite
intersection partitions, the construction of $\mathscr{G}_{k+1}$ out of
$\mathscr{G}_k$ follows a stepwise approach, in which new productions are
established on a per rewriting rule basis. Consequently, the general technique
of generating reduction grammars does not depend on specific features of $`l`y$,
but rather on more general traits of certain first-order rewriting systems.
Nonetheless, the full scope of our technique is yet to be determined.

Although the presented construction is based on the leftmost-outermost
evaluation strategy, it does not depend on the specific size notion associated
with \luterms; in principle, more involved size models can be assumed and
analysed. The assumed evaluation strategy, size notion, as well as the
specific choice of $`l`y$ are clearly arbitrary and other, equally perfect choices
for modelling substitution resolution could have been made. However, due to
merely eight rewriting rules forming $`l`y$, it is one of the conceptually
simplest calculus of explicit substitutions. Together with the normal-order
evaluation tactic, it is therefore one of the simplest to investigate in
quantitative terms and to demonstrate the finite
intersection partitions technique.

Due to the unambiguity of constructed grammars $\seq{\mathscr{G}_k}_k$ it is
possible to automatically establish their corresponding combinatorial
specifications and, in consequence, obtain respective generating functions
encoding sequences $\seq{g^{(k)}_{n}}_n$ comprised of numbers $g^{(k)}_{n}$ associated
with \luterms of size $n$ which reduce in $k$ normal-order rewriting steps to
their $`y$\nobreakdash-normal forms. Singularity analysis provides then the
means for systematic, quantitative investigations into the properties of
substitution resolution in $`l`y$, as well as its machine-independent
operational complexity. Finally, with generating functions at hand, it is
possible to undertake a more sophisticated statistical analysis of substitution
(in particular $`y$\nobreakdash-normalisation) using available techniques of
analytic combinatorics, effectively analysing the average-case cost of
\lcalculus and related term rewriting systems.

%\printbibliography
\bibliographystyle{plain}
\bibliography{references}

\end{document}